\title{Complexity of Simple Folding of Mixed Orthogonal Crease Patterns}
\author{%
  \tabcolsep=1em
  
  \begin{tabular}{ccc}
    Hugo Akitaya%
      \thanks{Miner School of Computer \& Information Sciences, University of Massachusetts
        Lowell, USA, \protect\url{hugo_akitaya@uml.edu}}
  & Josh Brunner%
      \thanks{MIT Computer Science and Artificial Intelligence Laboratory,
        32 Vassar St., Cambridge, MA 02139, USA,
        \protect\url{{brunnerj,edemaine,dylanhen,vluo,tockman}@mit.edu}}
  & Erik D. Demaine\footnotemark[2]
  \\
    Dylan Hendrickson\footnotemark[2]
  & Victor Luo\footnotemark[2]
  & Andy Tockman\footnotemark[2]
  \end{tabular}%
}
\date{}
\newif\ifabstract
\newif\iffull
\makeatletter \hypersetup{pdftitle={Simple Foldability Hardness}}}
 \gdef\xxxmark{%
   \expandafter\ifx\csname @mpargs\endcsname\relax 
     \expandafter\ifx\csname @captype\endcsname\relax 
       \marginpar{xxx}
     \else
       xxx 
     \fi
   \else
     xxx 
   \fi}
 \gdef\xxx{\@ifnextchar[\xxx@lab\xxx@nolab}
 \long\gdef\xxx@lab[#1]#2{\textbf{[\xxxmark #2 ---{\sc #1}]}}
 \long\gdef\xxx@nolab#1{\textbf{[\xxxmark #1]}}
\gdef\fps@figure{!htbp}}
\let\realbfseries=\bfseries
\def\bfseries{\realbfseries\boldmath}
\newtheorem{theorem}{Theorem}[section]
\newtheorem{lemma}[theorem]{Lemma}
\newtheorem{corollary}[theorem]{Corollary}
\let\epsilon=\varepsilon
\def\defn#1{\textbf{\textit{\boldmath #1}}}
\begin{document}
\maketitle

\begin{abstract}
  Continuing results from JCDCGGG 2016 and 2017, we solve several new cases of
  the \defn{simple foldability problem} --- deciding which crease patterns
  can be folded flat by a sequence of (some model of) simple folds.
  We give new efficient algorithms for \defn{mixed} crease patterns,
  where some creases are assigned mountain/valley while others are unassigned,
  for all 1D cases and for 2D rectangular paper with orthogonal one-layer simple folds.
  By contrast,
  we show strong NP-completeness for mixed orthogonal crease patterns
  on 2D rectangular paper with some-layers simple folds, complementing a
  previous result for all-layers simple folds.
  We also prove strong NP-completeness for finite simple folds
  (no matter the number of layers)
  of unassigned orthogonal crease patterns on arbitrary paper,
  complementing a previous result for assigned crease patterns,
  and contrasting with a previous positive result
  for infinite all-layers simple folds.
  In total, we obtain a characterization of polynomial vs.\ NP-hard for all
  cases --- finite/infinite one/some/all-layers simple folds of
  assigned/unassigned/mixed orthogonal crease patterns on
  1D/rectangular/arbitrary paper ---
  except the unsolved case of infinite all-layers simple folds of
  assigned orthogonal crease patterns on arbitrary paper.
\end{abstract}

\section{Introduction}

In the well-studied \defn{simple foldability problem}
\cite{map,simple,infinitealllayers,Demaine-O'Rourke-2007},
we are given a \defn{crease pattern}
consisting of line-segment \defn{creases}, possibly \defn{assigned}
mountain or valley, on a 2D region called the \defn{piece of paper},
and are asked whether all of the creases can be folded
via a sequence of ``simple folds''.
Each \defn{simple fold} folds some set of layers of the piece of paper
around a single line by $\pm 180^\circ$
(thus preserving flatness of the folding).
In the \defn{1D} variation, the piece of paper is a 1D line segment and
the creases are points.

Many different models for simple folds and special cases for the simple
foldability problem have been considered, depending on the following
characteristics:
\begin{itemize}
\item How many layers of paper a simple fold can move at once.
  The most powerful model, \defn{some-layers}, allows the top or bottom $k$
  layers to be folded for any $k$.  Two more restrictive models are
  \defn{one-layer}, which permits folding only a single layer at a time
  (modeling thick material); and \defn{all-layers}, which requires folding
  all layers simultaneously.
\item Whether simple folds can be along \defn{finite} line segments (chords
  of the crease pattern), or must be along \defn{infinite} lines (modeling
  a half-plane/large flipping tool).
\item Whether some creases are assigned as needing to be folded mountain
  or valley.  In the more common \defn{assigned} and \defn{unassigned} cases,
  all creases either have or lack an assignment, while in the \defn{mixed} case
  \cite{infinitealllayers}, each crease may be mountain, valley, or unassigned.
\item What shape the paper is allowed to have.  In 1D, the only option is a
  line segment. In 2D, the two cases we consider are rectangles and
  arbitrary polygons, though all NP-hardness results hold even for simple orthogonal polygons.
\end{itemize}

We consider exclusively \defn{orthogonal crease patterns}, which contain only vertical and horizontal creases. Prior work \cite{map,simple} has also studied versions of the problem including diagonal creases.

Table~\ref{summary} presents all known complexities of problems based on these parameters,
including both previously known results and new results from this paper
(in bold).
Note that, for orthogonal crease patterns, the finite vs.\ infinite simple fold distinction only plays a role
with arbitrary paper: no distinction can be made in the 1D case, and equivalence for rectangular paper is given by \cite[Theorem~8]{simple}.

\begin{table}
  \renewcommand\r[2]{\multirow{#1}{*}{#2}}
  \renewcommand\c[2]{\multicolumn{#1}{c|}{#2}}
  
  \definecolor{header}{rgb}{0.29,0,0.51}
  \definecolor{hard}{rgb}{1,0.85,0.85}
  \definecolor{open}{rgb}{0.95,0.95,0.5}
  \definecolor{easy}{rgb}{0.85,0.85,1}
  \newcommand\header{\cellcolor{header}\color{white}}
  \renewcommand\between[1]{\raisebox{.6ex}{#1}}
  \newcommand\betweenskip{-1ex}
  \newcommand\Left{\clap{$\Leftarrow$}}
  \newcommand\Right{\clap{$\Rightarrow$}}
  \newcommand\Up{\between{$\Uparrow$}}
  \newcommand\Down{\between{$\Downarrow$}}
  \newcommand\UpContain{\between{\rotatebox[origin=c]{-90}{$\subset$}}}
  \newcommand\DownContain{\between{\rotatebox[origin=c]{90}{$\subset$}}}
  \renewcommand\And{\clap{\&}}
  \newcommand\open{\cellcolor{open}OPEN}
  \newcommand\poly{\cellcolor{easy}poly}
  \newcommand\strong{\cellcolor{hard}NP-comp.}
  \newcommand\secref[1]{\rm\small~(\S\ref{sec:#1})}
  \newcommand\unsecref{\phantom{\small~(\S0)}}
  \newcommand\citeref[1]{\rm\tiny~\raisebox{0.65ex}{\cite{#1}}}
  \newcommand\leftjust[1]{#1}
  
  \arrayrulecolor{white}
  \arrayrulewidth=2pt
  \tabcolsep=4pt
  
  \newlength\layerswidth
  \settowidth\layerswidth{Layers}
  
  \centering
  \begin{tabular}{|c|c|c|c|c|c|c|c|c|}
    \c2{} & \header            & & \header                     & & \c3{\header Arbitrary Paper}
    \\ \hhline{|>{\arrayrulecolor{white}}-|-|>{\arrayrulecolor{header}}->{\arrayrulecolor{white}}|-|>{\arrayrulecolor{header}}->{\arrayrulecolor{white}}|-|-|-|-|}
    \c2{} & \r{-2}{\header 1D} & \r{-2}{\clap{$\subset$}} & \r{-2}{\header Rectangular} & \r{-2}{\clap{$\subset$}} & \header Infinite & \And & \header Finite
    \\ \hline
    \header
    & \header Assigned
    & \poly & \Left & \poly \citeref{map} &
    & \strong \citeref{simple} & & \bf\strong \secref{unassigned finite}
    \\
    \header & \UpContain & \Up & & \Up & & \Down & & \Down
    \\[\betweenskip]
    \header
    & \header Mixed
    & \bf\poly\secref{1D mixed some} & \Left & \bf\poly \secref{rect mixed one} &
    & \leftjust{\strong} & & \leftjust{\strong}
    \\
    \header & \DownContain & \Down & & \Down & & \Up & & \Up
    \\[\betweenskip]
    \header\multirow{-4.3}{\layerswidth}{\centering\header One \\ Layer}
    & \header Unassigned
    & \poly & \Left & \poly \citeref{map} &
    & \strong \citeref{simple} & & \leftjust{\bf\strong \secref{unassigned finite}}
    \\
    \c9{}
    \\[\betweenskip]
    \header
    & \header Assigned
    & \poly & \Left & \poly \citeref{map} &
    & \strong \citeref{simple} & & \bf\strong \secref{unassigned finite}
    \\
    \header & \UpContain & \Up & & & & \Down & & \Down
    \\[\betweenskip]
    \header
    & \header Mixed
    & \bf\poly \secref{1D mixed some} & & \bf\strong \secref{rect mixed some} & \Right
    & \leftjust{\strong} & \And & \leftjust{\strong}
    \\
    \header & \DownContain & \Down & & & & \Up & & \Up
    \\[\betweenskip]
    \header\multirow{-4.3}{\layerswidth}{\centering\header Some \\ Layers}
    & \header Unassigned
    & \poly & \Left & \poly \citeref{map} &
    & \strong \citeref{simple} & & \leftjust{\bf\strong \secref{unassigned finite}}
    \\
    \c9{}
    \\[\betweenskip]
    \header
    & \header Assigned
    & \poly & \Left & \poly \citeref{map} &
    & \open & & \bf\strong \secref{unassigned finite}
    \\
    \header & \UpContain & \Up & & & & & & \Down
    \\[\betweenskip]
    \header
    & \header Mixed
    & \bf\poly \secref{1D mixed all} & & \strong \citeref{infinitealllayers} & \Right
    & \strong & \And & \strong
    \\
    \header & \DownContain & \Down & & & & & & \Up
    \\[\betweenskip]
    \header\multirow{-4.3}{\layerswidth}{\centering\header All \\ Layers}
    & \header Unassigned
    & \poly & \Left & \poly \citeref{map} & \Left
    & \poly \citeref{infinitealllayers} & & \leftjust{\bf\strong \secref{unassigned finite}}
    \\
  \end{tabular}
  \caption{Summary of prior results (cited) and new results (bold, with section numbers) about simple folding orthogonal crease patterns. ``$\subset$'' denotes containment between classes of instances.  ``$\Rightarrow$'' denotes implications between results that are immediate from containment.
  ``\&'' represents that multiple cells are both part of the adjacent ``$\subset$'' or ``$\Rightarrow$'' relation.
  Hardness results for arbitrary paper hold even when restricted to
  simple orthogonal polygons.}
  \label{summary}
\end{table}

\subsection{Results}
\label{sec:results}

Our new results can be summarized as follows:

\begin{enumerate}
\item \label{secaa:1D mixed}
  \textbf{One- \& some-layers mixed 1D; one-layer mixed rectangular (\S\ref{sec:1D mixed one+some}).}
  We adapt arguments from \cite{map}, providing new characterizations of when a 1D assigned (\S\ref{sec:1D assigned characterization}) or mixed (\S\ref{sec:1D mixed some}) crease pattern is flat-foldable. For one-layer folds on rectangular paper (\S\ref{sec:rect mixed one}), crossing creases can never be folded, so the problem reduces to 1D paper.
\item \label{secaa:1D mixed all}
  \textbf{All-layers mixed 1D (\S\ref{sec:1D mixed all}).}
  We show that folding the superficially foldable (i.e., foldable ignoring assignments) crease nearest an end of the paper preserves foldability of the crease pattern, giving an efficient greedy strategy for deciding simple foldability.
\item \label{secaa:rect mixed some}
  \textbf{Some-layers mixed rectangular (\S\ref{sec:rect mixed some}).}
  We show that the NP-hardness reduction from \cite{infinitealllayers} for all-layers mixed rectangular also works for some-layers.
\item \label{secaa:unassigned finite}
  \textbf{Unassigned arbitrary finite (\S\ref{sec:unassigned finite}).}
  We modify the NP-hardness proof from \cite{simple} for assigned crease patterns to work for unassigned crease patterns, by adding a component that enforces the relevant aspects of the assignment.
  We also fix a bug in the reduction of \cite{simple}
  (also present in \cite{map})
  that allowed the folding to ``cheat'' instead of solving 3-Partition.
  The fixed previous proof and our adaptation apply to any number of layers.
\end{enumerate}

\label{sec:remaining-case}
Notably, the only case that remains unsolved for orthogonal crease patterns is infinite all-layers simple folds of assigned crease patterns on arbitrary paper.

\section{1D Paper, Mixed Assignment, One- and Some-Layers Folds}
\label{sec:1D mixed one+some}

In this section, we present new polynomial-time results for
1D paper in the one- and some-layers models.
In Section~\ref{sec:1D assigned characterization}, we give a new characterization of which 1D assigned crease patterns are foldable in the one- or some-layers models. In Section~\ref{sec:1D mixed some}, we extend our new characterization to a polynomial-time characterization for 1D mixed-assignment one- and some-layers crease patterns. Finally, in Section~\ref{sec:rect mixed one}, we show that the 1D mixed-assignment one-layer result generalizes to rectangular paper. (The same is not true for rectangular some-layers; we show in Section~\ref{sec:rect mixed some} that this version is strongly NP-complete.)

\subsection{Characterization for 1D Paper, Full Assignment}
\label{sec:1D assigned characterization}

We begin by introducing a new characterization of flat-foldable 1D assigned crease patterns. Such crease patterns have previously been characterized \cite{map,Demaine-O'Rourke-2007}, but our approach is easier to adapt to the mixed case because it can be described directly in terms of the crease pattern, without making a fold and recursively considering intermediate states.

Recall that a 1D crease pattern consists of a horizontal line segment,
called the \defn{paper},
together with a finite set of points on the paper,
called \defn{creases}.
Refer to Figure~\ref{fig:1d-def}(a--b).
The paper has two endpoints called \defn{ends};
together, we refer to creases and ends as \defn{vertices}.
For two vertices $P$ and $Q$ with $P$ to the left of $Q$,
let $[P,Q]$ denote the \defn{interval} of paper between them.
If $P$ and $Q$ are \emph{consecutive} vertices in the pattern,
we call the interval a \defn{segment} and denote it by $P Q$.
An interval $[P,Q]$ has up to two \defn{flaps}:
the segments just left of $P$ and just right of~$Q$.
The \defn{far endpoint} of a flap of $[P,Q]$
is the flap's endpoint other than $P$ or~$Q$.
An interval $[P,Q]$ is \defn{interior} if neither of its endpoints
are ends of the paper, in which case it has exactly two flaps.

\begin{figure}
	\centering
	\includegraphics[scale=2]{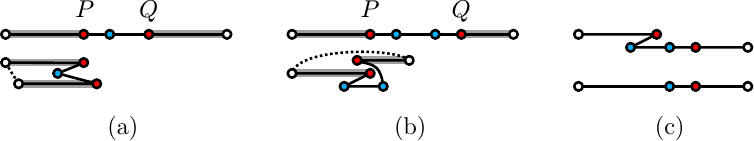}
	\caption{(a--b) Two examples of suspicious innocent intervals $[P,Q]$ with (a) an odd number of creases and (b) with an even number of creases. The flaps of $[P,Q]$ are highlighted in gray. Mountains (respectively, valleys) are drawn as red (respectively, blue) dots. The folded states drawn on the bottom are slightly deformed to convey the layer order. (c) A crimp executed on the two leftmost creases of the crease pattern in (b) and its resulting simplified crease pattern.}
	\label{fig:1d-def}
\end{figure}

Every point on the paper has a uniquely determined horizontal position
in the folded state: assuming the first segment remains stationary,
the horizontal position of each successive segment is determined by
alternating going left and right.
Let $f$ be the continuous function mapping the paper to its folded-state
geometry as above.
Call an interior interval $[P,Q]$ \defn{suspicious} if
the images (via~$f$) of its flaps' far endpoints
are both strictly outside the image of $[P,Q]$ in the folded state.
Call a suspicious interval $[P,Q]$ \defn{innocent}
if the numbers of mountain creases and valley creases in it
(including vertices $P$ and~$Q$) differ by at most~$\pm 1$.
Again refer to Figure~\ref{fig:1d-def}(a--b).
Our new characterization of flat foldability is simply:
\begin{center}
  A 1D assigned crease pattern is flat foldable $\iff$ Every suspicious intervals is innocent.
\end{center}

We prove flat foldability using two folding operations
from~\cite{map} that are parameterized by a segment $P Q$.
If $P Q$ is an interior segment that is nonstrictly shorter than both its flaps,
and $P$ and $Q$ are assigned opposite orientations
(one mountain and one valley),
then we call $P Q$ \defn{crimpable} and
define a \defn{crimp} to fold $P$ and then $Q$
(with their assigned orientations);
see Figure~\ref{fig:1d-def}(c).
If $P$ (respectively~$Q$) is an end of the paper,
and the segment $P Q$ is nonstrictly shorter than its one flap,
then we call $P Q$ \defn{end-foldable}
and define an \defn{end fold} to fold $Q$ (respectively~$P$).
Both crimps and end folds can be performed as one-layer simple folds,
and produce a folded state where unfolded creases
do not overlap any other unfolded point,
and thus the unfolded creases remain single-layer.
As a result, after applying these operations,
we can reduce the problem to folding a smaller crease pattern obtained by ``gluing'' the overlapping segments, which we call the \defn{result}
of the operation; see the bottom of Figure~\ref{fig:1d-def}(c).
Any sequence of simple folds in the smaller crease pattern
(in the one- or some-layers model) can be applied to the folded state,
and thus extended to the original crease pattern (in the same model,
because the crimped or end-folded creases remain single-layer).

\begin{theorem} \label{thm:1dchar}
For an assigned 1D crease pattern, the following are equivalent:
\begin{enumerate}
	\item The crease pattern is flat-foldable by a sequence of crimps and end folds.
	\item The crease pattern is flat-foldable by a sequence of one-layer simple folds.
	\item The crease pattern is flat-foldable by a sequence of some-layers simple folds.
	\item The crease pattern is flat-foldable.
	\item Every suspicious interval in the crease pattern is innocent.
\end{enumerate}
\end{theorem}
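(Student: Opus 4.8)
The plan is to prove the cycle of implications $1 \Rightarrow 2 \Rightarrow 3 \Rightarrow 4 \Rightarrow 5 \Rightarrow 1$. The first three implications are immediate from the definitions recalled above: a crimp is a pair of one-layer simple folds and an end fold is a single one-layer simple fold, so $1 \Rightarrow 2$; a one-layer simple fold is a special case of a some-layers simple fold, so $2 \Rightarrow 3$; and any sequence of some-layers simple folds terminates in a flat folded state, so $3 \Rightarrow 4$. Hence all the content lies in $4 \Rightarrow 5$ and $5 \Rightarrow 1$.

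For $5 \Rightarrow 1$ I would induct on the number of creases, the base case (no creases) being trivial. The inductive step needs two facts. First, \emph{a crimp or end fold is always available}: let $\ell$ be the minimum segment length and take an inclusion-maximal run of consecutive segments all of length~$\ell$, with spanning interval $[A,B]$. If this run abuts an end of the paper, then the end segment of the run is end-foldable (its one flap also has length $\geq \ell$). Otherwise $A$ and $B$ are creases and, by maximality, both flaps of $[A,B]$ are strictly longer than~$\ell$; since every segment inside $[A,B]$ has length exactly~$\ell$, its folded image has length exactly~$\ell$ while each flap's far endpoint is at distance $>\ell$ from an endpoint of that image, so $[A,B]$ is suspicious. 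Now if no crimp were available, then every segment of the run --- each of which is interior and nonstrictly shorter than both of its flaps --- would have endpoints of equal orientation, forcing all creases of $[A,B]$ to share one orientation and hence $[A,B]$ to be non-innocent, contradicting~(5). Second, \emph{performing the move preserves~(5)}: gluing leaves the folded positions of all surviving vertices unchanged, so a suspicious interval of the reduced pattern either coincides with a suspicious interval of the original or lifts to one by reinserting the crimped pair --- one mountain and one valley --- into the interval or into the merged flap; either way the reinsertion changes the mountain/valley difference by~$0$, so innocence is inherited. Combining, a reduced pattern still satisfies~(5), is foldable by crimps and end folds by induction, and prepending the performed move completes the step.

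For $4 \Rightarrow 5$ I would argue the contrapositive: a suspicious interval $[P,Q]$ that is not innocent admits no flat folded state. Fix the forced folded geometry~$f$ and set $I = f([P,Q])$. Suspiciousness says that each flap of $[P,Q]$, starting from $f(P)$ or $f(Q) \in I$, reaches strictly past an end of~$I$, so in any candidate layer order the two flaps protrude beyond the folded stack of the segments of $[P,Q]$. Invoking the non-crossing condition characterizing valid $1$D layer orderings, this protrusion pins down enough of the order on the segments of $[P,Q]$: consecutive layers are forced to be nested adjacently wherever a flap shields the intervening fold, and the mountain/valley label at each crease of $[P,Q]$ then fixes the direction of that nesting. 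An imbalance of at least two between mountains and valleys makes these forced directions ``pile up,'' requiring two distinct layers of $[P,Q]$ to sit immediately on the same side of a common layer --- impossible. The odd and even cases for the number of creases in $[P,Q]$ (Figure~\ref{fig:1d-def}) are treated separately, since the position of $f(P),f(Q)$ within $I$ behaves differently.

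I expect $4 \Rightarrow 5$ to be the main obstacle: making ``the flap shields the fold'' precise, and tracking exactly which segments realize the extremes of~$I$, is where the careful layer-ordering bookkeeping lives. The secondary technical point is the preservation step in $5 \Rightarrow 1$, where one must verify that enlarging a suspicious interval of the reduced pattern back through a merged segment yields an interval that is still suspicious in the original; this uses the length inequalities built into the definitions of \emph{crimpable} and \emph{end-foldable}.
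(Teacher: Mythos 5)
Your cycle of implications and your treatment of $1\Rightarrow2\Rightarrow3\Rightarrow4$ and of $5\Rightarrow1$ track the paper closely: the existence of a crimp or end fold via a maximal run of minimum-length segments is exactly the paper's Lemma~\ref{lem:innocent -> crimp}, and your ``reinsert the crimped pair into the interval or into the merged flap'' dichotomy is the right one for the preservation step (the paper's Lemma~\ref{lem:crimp preserves innocent} makes it precise by casing on whether the image of the inner crimped crease lands inside or outside the image of $[P,Q]$, which is the verification you correctly flag as needing the crimpability length inequality).

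The genuine gap is in $4\Rightarrow5$. Your proposed mechanism rests on the claim that suspiciousness forces ``consecutive layers to be nested adjacently wherever a flap shields the intervening fold,'' but consecutive segments of a flat-folded interval need not be adjacent in the layer order --- a strip folded into a spiral already violates this --- and the suspicious condition does nothing to prevent the interval's own segments from interleaving arbitrarily between a crease and the flap that ``shields'' it. So the ``pile-up'' contradiction is not established, and you have (rightly) identified this as the main obstacle without supplying the idea that closes it. The paper's argument avoids layer orderings entirely: restrict to the suspicious interval and its flaps, join the two far endpoints by an arc that misses the folding (possible precisely because both far endpoints lie strictly outside the image of $[P,Q]$), and observe that the resulting simple closed loop has net turning $360^\circ$, hence exactly two more convex than reflex creases; since the one or two creases added by the arc sit at extreme positions and are therefore convex, the original creases of the interval must have mountain and valley counts differing by at most one. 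This is a Maekawa-style counting argument (cf.\ \cite{Hull-2003-counting}), and it is the missing ingredient you should substitute for the layer-ordering bookkeeping.
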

\begin{proof}
  We prove a cycle of implications:
  $1 \implies 2 \implies 3 \implies 4 \implies 5 \implies 1$.

\begin{description}
\item[$1\implies 2$:]
As described above, each end fold or crimp can be performed as a sequence
of one or two (respectively) one-layer simple folds,
and these operations preserve that every unfolded crease is single-layer.

\item[$2\implies 3$:]
By containment of models.

\item[$3\implies 4$:]
Simple folding produces flat foldings.

\item[$4\implies 5$:]
Consider a flat folding of the paper, and restrict attention to any suspicious interval and its flaps (removing the rest of the paper).
Add a line segment connecting the two far endpoints (the dotted curves in Figure~\ref{fig:1d-def}(a--b)).
Because the interval is suspicious, this line segment does not intersect the flat folding.

The flat folding plus the added line segment forms a loop,
with one added crease when the interval has an odd number of creases
(Figure~\ref{fig:1d-def}(a)) or
two added creases when the interval has an even number of creases
(Figure~\ref{fig:1d-def}(b)).
These added creases are convex relative to the loop, because they are at the leftmost or rightmost positions of the loop. A non-intersecting closed loop has net turn angle $360^\circ$, so the loop must have two more convex creases than reflex creases.

In the even case, we have added two convex creases and now have net two convex creases, so there must have previously been the same number of convex and reflex creases. In the odd case, we have added one convex crease, so there must have previously been one more convex than reflex. Therefore the interval is innocent.
(This argument is essentially a generalization of
\cite[Theorems 2 and 4]{Hull-2003-counting}.)

\item[$5\implies 1$:]

We prove some lemmas relating innocent suspicious intervals to crimps and end folds:

\begin{lemma}\label{lem:innocent -> crimp}
If every suspicious interval is innocent, then
some segment must be crimpable or end-foldable.
\end{lemma}
\begin{proof}
  Let $P Q$ be a shortest segment in the crease pattern.
  Consider the maximal sequence of equal-length segments
  containing $P Q$.
  If one end is an end of the paper, then the incident segment is end-foldable.
  If neither end is an end of the paper, then
  this interior interval of equal-length segments is suspicious.
  The interval is thus innocent by assumption,
  and contains at least two creases ($P$ and $Q$),
  so it must contain at least one mountain and at least one valley
  and in particular must contain adjacent mountain and valley creases.
  The segment between these two creases is crimpable
  because it has the same minimum length as $P Q$
  so the two neighboring flaps are at least as long.
\end{proof}

\begin{lemma}\label{lem:crimp preserves innocent}
Performing a crimp preserves that every suspicious interval is innocent.
\end{lemma}
\begin{proof}
Suppose we perform a crimp by folding adjacent creases $A$ and~$B$,
and assume that every suspicious interval is innocent before the crimp.
The segment $A B$ must be (nonstrictly) shorter than the two neighboring
segments, and the result of the crimp merges these three segments of paper
into one new segment.
Consider some suspicious interval $[P,Q]$ after the crimp,
and let its far endpoints be $X$ and $Y$, adjacent to $P$ and $Q$ respectively.
By assumption, the images of $X$ and $Y$ in the folded state lie
strictly outside the image of $[P,Q]$.
Consider the new segment's relation to $[P,Q]$,
as shown in Figure~\ref{fig:l2-3-cases}:
\begin{figure}
	\centering
	\includegraphics[scale=2]{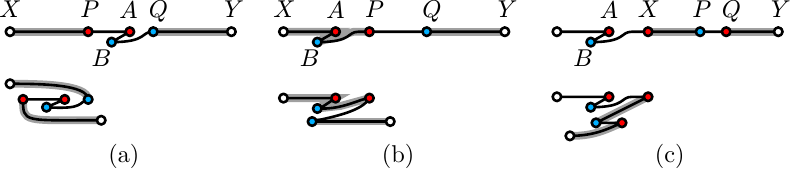}
	\caption{Cases in the proof of Lemma~\ref{lem:crimp preserves innocent}.}
	\label{fig:l2-3-cases}
\end{figure}

\begin{enumerate}[(a)]
  \item If $[P,Q]$ contains the new segment (as in Figure~\ref{fig:l2-3-cases}(a)), then crimping did not change the image of $[P,Q]$ or its flaps in the folded state. Thus $[P,Q]$ was suspicious before the crimp, so was innocent by assumption. The crimp does not change the net number of creases in $[P,Q]$ (because the crimp consumes one mountain and one valley), so $[P,Q]$ is still innocent.
  \item If the new segment is a flap of $[P,Q]$, assume by symmetry that it is the flap adjacent to $P$, and the vertices $X$, $A$, $B$, and $P$ occur consecutively in that order before crimping (as in Figure~\ref{fig:l2-3-cases}(b)).  Consider how $B$ relates to $[P,Q]$ in the folded state:
  \begin{itemize}
    \item Suppose that, in the folded state, the image of $B$ is strictly outside the image of $[P,Q]$. Then $[P,Q]$ is suspicious before folding the crimp, because its far endpoints are $B$ and $Y$.
    \item Now suppose that the image of $B$ is inside the image of $[P,Q]$. By crimpability, $A B$ is no longer than $B P$, so the image of the interval $[A,P]$ is entirely contained in the image of $[P,Q]$. Because $[P,Q]$ is suspicious after crimping, $X$ and $Y$ land outside the image of $[P,Q]$, so they both also lie outside the image of $[A,Q]$. Thus $[A,Q]$ is suspicious before folding the crimp.
  \end{itemize}
  In either case, one of $[P,Q]$ and $[A,Q]$ was suspicious and thus innocent before the crimp. The number of creases of each orientation (mountain or valley) in $[P,Q]$ is the same before and after the crimp, and $[A,Q]$ has one additional crease of each orientation. Thus, in either case, $[P,Q]$ is innocent after the crimp.
  \item Otherwise, the crimp lies outside of $[P,Q]$ and its flaps (as in Figure~\ref{fig:l2-3-cases}(c)), and thus does not affect this interval, so $[P,Q]$ remains innocent. \qedhere
\end{enumerate}
\end{proof}

\begin{lemma}\label{lem:end fold preserves innocent}
Performing an end fold preserves that every suspicious interval is innocent.
\end{lemma}
\begin{proof}
By symmetry consider the case where the end fold is the leftmost crease.
The result of this end fold just deletes the leftmost segment
of the crease pattern.
This deletion destroys any intervals containing the leftmost crease,
and does not affect the innocence of any intervals
not containing the leftmost crease.
\end{proof}

To prove $5\implies 1$, we induct on the number of creases.
Because every suspicious interval is innocent,
Lemma~\ref{lem:innocent -> crimp} gives us a crimpable or end-foldable segment.
Thus we find a crimp or end-fold to perform first.
By Lemmas~\ref{lem:crimp preserves innocent} and
\ref{lem:end fold preserves innocent}, this operation
maintains that every suspicious interval is innocent
in the resulting crease pattern, which also has fewer creases.
By induction, we obtain a sequence of crimps and end folds to fold the result.
Prepending the first operation folds the crease pattern.
\end{description}

The cycle of implications completes the proof of Theorem~\ref{thm:1dchar}.
\end{proof}

\subsection{1D Paper, Mixed Assignment, One-Layer and Some-Layers Folds}
\label{sec:1D mixed some}

We now consider mixed 1D crease patterns, and give a polynomial-time algorithm to determine flat-foldability.

An \defn{assignment} for a mixed 1D crease pattern specifies whether each unassigned crease should be mountain or valley. We call such an assignment \defn{valid} if the resulting assigned crease pattern is flat-foldable --- or equivalently, by Theorem~\ref{thm:1dchar}, if every suspicious interval is innocent.

\begin{theorem}\label{thm:1d mixed foldable}
  Given a 1D mixed crease pattern, we can in polynomial time determine whether it has a valid assignment, and find one if it exists.
\end{theorem}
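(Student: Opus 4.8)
The plan is to set this up as a constraint-satisfaction problem over the unassigned creases, and then argue that the constraints coming from Theorem~\ref{thm:1dchar} have a special structure (2-SAT-like, or even simpler) that makes them solvable in polynomial time. The key observation is that whether an interval $[P,Q]$ is \emph{suspicious} depends only on the folded-state geometry, which in turn depends only on the positions of the vertices (the horizontal position of each segment is determined by alternating left/right displacements) and \emph{not} on the mountain/valley assignment. So the set of suspicious intervals is fixed and computable in polynomial time, independent of the assignment. The only thing the assignment affects is whether each suspicious interval is \emph{innocent}, i.e., whether its mountain and valley counts (including endpoints) differ by at most one.

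First I would enumerate all suspicious intervals. Since there are $O(n^2)$ intervals and each can be checked against the (precomputed) folded-state geometry in $O(n)$ time, this takes polynomial time. Next, for each suspicious interval, I would write down the innocence constraint as a constraint on the variables $x_i \in \{M, V\}$ assigned to the unassigned creases $i$ inside it. Here the main structural point: a suspicious interval $[P,Q]$ with $k$ total creases (counting $P,Q$) is innocent iff the number of mountains $m$ satisfies $\lfloor k/2 \rfloor \le m \le \lceil k/2 \rceil$; writing $a$ for the number of already-assigned mountains and $b$ for already-assigned valleys inside the interval and $u$ for the number of unassigned creases, we need the number of mountains among the $u$ unassigned ones to fall in an interval of length at most $1$ (or $0$ when $u$ is forced). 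This is a "cardinality within $\pm 1$" constraint on subsets of the unassigned-crease variables.

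The hard part will be showing these cardinality constraints can be satisfied simultaneously in polynomial time — a priori, cardinality constraints on overlapping subsets can be NP-hard. I expect the resolution to exploit the one-dimensional, nested/laminar-like structure of which creases lie in which suspicious interval, or to reduce to a flow/matching problem, or to show the constraints decompose into essentially independent local pieces (e.g., that a minimal suspicious interval forces a near-alternating M/V pattern on its creases, à la the crimp argument in Lemma~\ref{lem:innocent -> crimp}, and that propagating these forced patterns greedily either yields a valid assignment or a certificate of infeasibility). Concretely, I would try to prove that it suffices to look at the \emph{minimal} suspicious intervals (those containing no other suspicious interval), argue each such interval must be almost-alternating so it pins down its internal creases up to a global flip, and then show a larger suspicious interval's constraint is automatically satisfied once all the minimal ones it contains are — analogously to how Lemma~\ref{lem:crimp preserves innocent} shows innocence of $[P,Q]$ follows from innocence of a related smaller interval.

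Finally, I would handle the search/algorithmic wrap-up: either set up a 2-SAT instance (one Boolean variable per unassigned crease, clauses encoding the local almost-alternation forced by minimal suspicious intervals together with consistency between overlapping minimal intervals) and invoke linear-time 2-SAT, or run the direct greedy propagation and verify at the end, in polynomial time, that all $O(n^2)$ suspicious intervals are innocent; if so output the assignment, otherwise report that no valid assignment exists. The correctness statement to nail down is the equivalence "the propagation succeeds $\iff$ a valid assignment exists," whose forward direction is immediate and whose reverse direction is where the structural lemma about minimal suspicious intervals does the real work.
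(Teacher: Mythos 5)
Your setup matches the paper's: suspiciousness is assignment-independent, so the problem becomes a constraint-satisfaction problem in which each of the $O(n^2)$ suspicious intervals imposes a cardinality-within-$\pm 1$ constraint on the creases it contains, and the paper likewise resolves it by a greedy pass over minimal suspicious intervals. However, the structural claims you lean on for correctness are not right, and the actual key ideas are missing. First, a minimal suspicious interval does \emph{not} pin down its creases up to a global flip: innocence constrains only the mountain/valley \emph{counts}, not their arrangement, so an interval with $2k$ creases admits $\binom{2k}{k}$ innocent assignments and nothing forces near-alternation (alternation is relevant to crimpability in Lemma~\ref{lem:innocent -> crimp}, not to innocence). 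Second, it is not established that larger suspicious intervals become innocent automatically once the minimal ones they contain are; the paper's algorithm must process \emph{every} suspicious interval, from smallest to largest. A reduction to 2-SAT is also not plausible, since these are cardinality constraints over sets of arbitrarily many variables.

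The two ingredients you are missing are: (i) the closure property that the intersection of two suspicious intervals is suspicious (Lemma~\ref{suspicious-intersection}), and (ii) the exchange argument of Lemma~\ref{lem:invariant}. The algorithm balances each minimal suspicious interval's counts as evenly as the prior partial assignment allows and, crucially, when the interval has an odd number of creases and a choice remains, it deliberately leaves one crease \emph{unassigned} as slack. Correctness then follows by showing that any valid completion of the old partial assignment can be rebalanced inside the processed interval $I$ (using the slack crease to fix the count) into a valid completion of the new one: suspicious intervals containing $I$ see unchanged mountain/valley counts, while suspicious intervals not containing $I$ intersect it in a smaller, already-fully-assigned suspicious interval by the closure property, so they are untouched. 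Without the closure lemma and the slack-crease device, your greedy propagation has no proof that it preserves satisfiability --- the step you yourself identify as where the real work lies, but leave undone.
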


\begin{proof}
  Whether an interval is suspicious does not depend on the assignment, only the positions of creases. Finding a valid assignment for the crease pattern is thus a constraint satisfaction problem, where each suspicious interval is a clause requiring that it is innocent. 

  At a high level, the algorithm performs these steps:
  \begin{enumerate}
  	\item \label{stp:1} Find all suspicious intervals, by checking all (quadratically many) pairs of creases. 
  	\item\label{stp:2} For each suspicious interval $I$ from smallest to largest, assign all or all-but-one still-unassigned creases in $I$, in a way that guarantees the innocence of $I$, and preserves the existence of a valid assignment.
      This step may fail with a report that no valid assignment exists.
  	\item \label{stp:3} Assign the remaining creases arbitrarily, and return the resulting valid assignment.
  \end{enumerate}

  To demonstrate a method of assigning suspicious intervals that preserves the existence of a valid assignment (Lemma~\ref{lem:invariant}), we will use the following lemma:

  \begin{lemma}
  \label{suspicious-intersection}
  The intersection of two suspicious intervals is suspicious.
  \end{lemma}
  \begin{proof}
  The image of the intersection of two intervals in the folded state is contained in the intersection of the images of the intervals in the folded state. 
  The right flap of the intersection is the same as the right flap of the left interval, so its far endpoint lies outside the image of the left interval, and thus also outside the image of the intersection.
  Similarly, the far endpoint of the intersection's left flap also lies outside the image of the intersection, so the intersection is suspicious.
  \end{proof}

  Define a \defn{minimal} suspicious interval to be one that contains no smaller suspicious intervals that have not yet been fully assigned.
  Our algorithm will repeatedly take a minimal suspicious interval, and assign all or all but one of the unassigned creases in it to guarantee its innocence.

  For each minimal suspicious interval $I$, we assign its creases as follows:

  \begin{itemize}
    \item If $I$ has an even number $2k$ of creases, we check whether its unassigned creases can be assigned to make exactly $k$ mountain and $k$ valley creases in $I$, i.e., whether the previously assigned creases have at most $k$ of each sign. If such as assignment is possible, use any such assignment; otherwise, we terminate the algorithm and report that there is no valid assignment.
    \item If $I$ has an odd number $2k+1$ of creases, we check whether it is possible to exclude one unassigned crease $P$ and find an assignment on the rest, to make $k$ mountain and $k$ valley creases among $I\setminus\{P\}$. 
    If this is possible, use any such assignment on the other creases and leave $P$ unassigned.
    Otherwise, if there is a full assignment that makes the interval innocent, use any such assignment. (This case happens only when there are $k+1$ previously assigned creases of one sign, so there is only one such assignment.)
    If this is also not possible, terminate the algorithm and report that there is no valid assignment.
  \end{itemize}

  We repeat this process until every suspicious interval has been considered.
  At this point, all suspicious intervals are guaranteed to be innocent,
  so we can assign all remaining creases arbitrarily,
  and return this as a valid assignment by Theorem~\ref{thm:1dchar}.
  Conversely, if the algorithm ever reports that there is no valid assignment,
  then there is no valid assignment by the following lemma:

  \begin{lemma}
    \label{lem:invariant}
    Assigning the creases for a single minimal suspicious interval $I$
    as described above preserves the existence of a valid assignment.
  \end{lemma}
  
  \begin{proof}
    Let $\mathcal{A}$ be the assignment before the algorithm processes
    suspicious interval~$I$, and let $\mathcal{A}_I$ be the assignment
    after the algorithm processes~$I$. 
    By assumption, $\mathcal{A}$ can be extended into a valid assignment $\mathcal{A}^+$.

    We construct an assignment $\mathcal{A}_I^+$ that extends $\mathcal{A}_I$,
    agrees with $\mathcal{A}^+$ everywhere except within interval~$I$,
    and assigns the same numbers of mountains (respectively valleys) to~$I$
    as $\mathcal{A}^+$ does.
    Starting from $\mathcal{A}^+$, we flip the assignment of each crease
    in $\mathcal{A}^+$ that does not agree with $\mathcal{A}_I$
    (which must be within~$I$).
    By definition of innocent, if $I$ has an even number $2 k$ of creases,
    then both $\mathcal{A}_I^+$ and $\mathcal{A}^+$ assign $k$ mountains
    and $k$ valleys to~$I$.
    Otherwise, $I$ has an odd number $2 k+1$ of creases.
    If the algorithm was forced to make a full assignment, it was because
    $\mathcal{A}$ already had $k+1$ mountains or valleys, so
    $\mathcal{A}_I^+$ and $\mathcal{A}^+$ agree on mountain/valley counts.
  	Otherwise, $\mathcal{A}_I$ left one crease $P$ in $I$ unassigned.
  	If the mountain/valley crease counts of $\mathcal{A}_I^+$ and
    $\mathcal{A}^+$ differ,
    then flip the assignment of $P$ in $\mathcal{A}_I^+$,
    making the counts the same while remaining consistent with $\mathcal{A}_I$.

    Now we show that $\mathcal{A}_I^+$ is a valid assignment.
    By Theorem~\ref{thm:1dchar}, it suffices to show that every
    suspicious interval $J$ is innocent in $\mathcal{A}_I^+$
    (given the same for~$\mathcal{A}^+$).
    If $J$ contains $I$, then $J$ is innocent:
    the mountain/valley counts in $\mathcal{A}_I^+$
    differ from $\mathcal{A}^+$ only within $I$, and
    within $I$ the counts are the same by construction.
    If $J$ does not contain $I$, then their intersection $I \cap J$
    is a smaller suspicious interval by Lemma~\ref{suspicious-intersection}.
    Because $I$ was a minimal suspicious interval, $I \cap J$ was already
    assigned in $\mathcal{A}$.
    Therefore $\mathcal{A}_I^+$ and $\mathcal{A}^+$ agree on $I \cap J$
    and thus~$J$, so $J$ remains innocent.
  \end{proof}

  Finally we discuss the running time.
  Step~\ref{stp:1} takes $O(n^3)$ time: for a given pair of creases, we can compute the image of the interval in the folded state and check whether the interval is suspicious in $O(n)$ time.
  Step~\ref{stp:2} can be done with one linear scan on the interval to count the number of assigned creases of each type and another linear scan to assign the unassigned creases, once for each of $O(n^2)$ suspicious intervals, for a total of $O(n^3)$ time.
  Step~\ref{stp:3} takes $O(n)$ time.
  This concludes the proof of Theorem~\ref{thm:1d mixed foldable}.
\end{proof}

By Theorem~\ref{thm:1dchar}, the algorithm in
Theorem~\ref{thm:1d mixed foldable} to find flat-foldable assignments
is sufficient to determine whether a mixed crease
pattern is flat-foldable in the one-layer and some-layers models:

\begin{corollary}\label{cor:1d mixed one}
  Given a mixed 1D crease pattern, we can determine in polynomial time whether it can be folded flat using one-layer simple folds.
\end{corollary}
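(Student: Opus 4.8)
The plan is to obtain the corollary essentially for free from Theorem~\ref{thm:1dchar} and Theorem~\ref{thm:1d mixed foldable}. The bridge is the observation that a mixed 1D crease pattern is flat-foldable by one-layer simple folds \emph{if and only if} it admits a valid assignment in the sense of Section~\ref{sec:1D mixed some}; once that equivalence is in hand, Theorem~\ref{thm:1d mixed foldable} supplies the polynomial-time decision procedure (and a witnessing assignment when one exists).

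For the ``only if'' direction I would argue as follows. Suppose the mixed pattern is folded flat by some sequence of one-layer simple folds. In any valid folding every crease must be folded, and each fold determines a mountain/valley orientation at the crease it folds; a crease that is preassigned must be folded with its prescribed orientation. Hence the orientations realized by the folding form an extension of the given partial assignment. The resulting fully assigned crease pattern is flat-foldable (in fact one-layer foldable, by the very sequence we started with), so by Theorem~\ref{thm:1dchar} this extension is a valid assignment.

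For the ``if'' direction: given a valid assignment, the induced assigned crease pattern is flat-foldable, so by the equivalence of items~2 and~4 of Theorem~\ref{thm:1dchar} it is foldable by a sequence of one-layer simple folds. Every fold in that sequence folds its crease with exactly the orientation dictated by the assignment, which in particular agrees with whatever orientation was preassigned in the mixed pattern; thus the same sequence of one-layer simple folds flat-folds the mixed pattern. Combining the two directions, deciding one-layer flat-foldability of a mixed 1D crease pattern is precisely deciding whether it has a valid assignment, which Theorem~\ref{thm:1d mixed foldable} does in polynomial time; the analogous some-layers statement would follow identically using item~3 of Theorem~\ref{thm:1dchar}.

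I do not expect a genuine obstacle here — all the substantive work lives in the two cited theorems. The only point needing a word of care is the ``only if'' step: one must note that a one-layer simple folding of a \emph{mixed} pattern really does induce a well-defined assignment consistent with the preassignment (each crease folded exactly once, assigned creases folded as prescribed), which is immediate from the definition of simple folding in the mixed model.
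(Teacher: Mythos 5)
Your proposal is correct and matches the paper's argument: the paper derives this corollary in one sentence by combining Theorem~\ref{thm:1dchar} (one-layer foldability of an assigned pattern $\iff$ flat-foldability $\iff$ validity of the assignment) with the polynomial-time valid-assignment algorithm of Theorem~\ref{thm:1d mixed foldable}. Your explicit two-direction bridge (a one-layer folding of the mixed pattern induces a valid assignment extending the preassignment, and conversely a valid assignment yields a one-layer folding respecting it) is exactly the reasoning the paper leaves implicit.
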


\begin{corollary}\label{cor:1d mixed some}
  Given a mixed 1D crease pattern, we can determine in polynomial time whether it can be folded flat using some-layers simple folds.
\end{corollary}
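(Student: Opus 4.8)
The plan is to obtain this corollary directly from the two preceding results: the characterization in Theorem~\ref{thm:1dchar} and the assignment-finding algorithm of Theorem~\ref{thm:1d mixed foldable}. The crux is a simple equivalence: a mixed 1D crease pattern can be folded flat by some-layers simple folds if and only if it admits a valid assignment in the sense of Theorem~\ref{thm:1d mixed foldable}.

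For the forward direction of that equivalence, I would suppose a valid assignment exists; then the resulting fully assigned pattern is flat-foldable, so by Theorem~\ref{thm:1dchar} it is foldable by some-layers simple folds, and executing that fold sequence folds the original mixed pattern flat, each unassigned crease being folded in whichever orientation the assignment prescribes (and each assigned crease according to its given orientation). For the converse, I would observe that in any simple folding of a 1D pattern each crease is folded exactly once and thereby receives a well-defined mountain or valley orientation (a crease folded a second time would unfold, contradicting flatness of the final state). Reading off these orientations on the unassigned creases gives an assignment extending the given partial assignment, under which the now fully assigned pattern is foldable by some-layers simple folds and hence, by the implication $3\implies 4$ of Theorem~\ref{thm:1dchar}, flat-foldable; that is, the assignment is valid.

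Given the equivalence, I would conclude by invoking Theorem~\ref{thm:1d mixed foldable}, which decides in polynomial time whether a valid assignment exists (and produces one). I do not expect a genuine obstacle here; the only step deserving a moment of care is the converse direction, i.e., verifying that a some-layers simple folding of a mixed pattern induces a consistent orientation on every unassigned crease, and in 1D this is immediate because flatness forces each crease to be used exactly once. (The argument is essentially word-for-word the proof of Corollary~\ref{cor:1d mixed one}, with item~3 of Theorem~\ref{thm:1dchar} used in place of item~2.)
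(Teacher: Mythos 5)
Your proposal is correct and matches the paper's approach: the paper derives this corollary (together with the one-layer version) in a single sentence from the equivalence of items 3 and 4 in Theorem~\ref{thm:1dchar} combined with the assignment-finding algorithm of Theorem~\ref{thm:1d mixed foldable}. You have merely spelled out the implicit equivalence between some-layers foldability of the mixed pattern and existence of a valid assignment, which is exactly the reasoning the paper relies on.
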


\subsection{Rectangular Paper, Mixed Assignment, One-Layer Folds}
\label{sec:rect mixed one}

As another consequence of Theorem~\ref{thm:1d mixed foldable}, we can extend Corollary~\ref{cor:1d mixed one} by replacing the 1D paper with rectangular paper.

\begin{corollary}\label{cor:rect mixed one}
  Given a mixed crease pattern on rectangular paper, we can determine in polynomial time whether it can be folded flat using one-layer simple folds.
\end{corollary}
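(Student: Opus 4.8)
The plan is to show that, on rectangular paper, the one-layer model is essentially the 1D model in disguise, and then invoke Corollary~\ref{cor:1d mixed one}. Concretely, I would establish the characterization: a mixed crease pattern on an axis-aligned rectangle $[0,W]\times[0,H]$ is flat-foldable by one-layer simple folds if and only if either (i) every crease is a horizontal segment spanning the full width $[0,W]$, or (ii) every crease is a vertical segment spanning the full height $[0,H]$, and moreover, in case (i) (resp.\ (ii)) the 1D mixed crease pattern obtained by collapsing the irrelevant coordinate onto $[0,H]$ (resp.\ $[0,W]$) is flat-foldable by one-layer simple folds. Given this equivalence, the algorithm is immediate: a single linear scan over the creases determines whether (i) or (ii) can apply (reject if neither does), and otherwise we run the 1D algorithm of Corollary~\ref{cor:1d mixed one} on the collapsed pattern. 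The crossing/orientation check is $O(n)$ (or at most $O(n^2)$ if done pairwise) and the 1D call is polynomial, so the whole procedure is polynomial.

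The ``if'' direction and the reduction to 1D are routine. When all creases are full-width horizontal segments, a one-layer simple fold sequence for the collapsed 1D pattern lifts verbatim: a 1D fold of a single layer at a point $y_0$ becomes a fold of the corresponding single layer of the rectangle along the line $\{(x,y_0)\}$, the layer decompositions are in bijection, one-layer folds map to one-layer folds, and a flat folding maps to a flat folding. Conversely, projecting any one-layer flat folding of the rectangle onto the $y$-axis yields a one-layer flat folding of the 1D pattern, using the structural fact (below) that every fold in the rectangular folding is along a full-width horizontal line that lies on a crease.

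The substance is the ``only if'' direction: a crease that is crossed by a perpendicular crease --- or, more generally, that fails to span the full width or full height --- can never be folded by a one-layer simple fold, so a one-layer-flat-foldable rectangular pattern has all creases parallel and spanning. I would prove this by induction over the folding sequence, maintaining the invariant that each layer, when unfolded, is a sub-rectangle of $[0,W]\times[0,H]$ each of whose sides either lies on a side of the original rectangle or lies on a crease of the pattern. A one-layer simple fold must fold one connected layer along a chord of that layer; on a (sub-)rectangle an axis-parallel chord is necessarily a full-width or full-height segment; and reflecting a sub-rectangle across such a chord again produces a sub-rectangle whose new boundary edge lies along the freshly created crease. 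Hence every crease ever created spans the full width (or height) of its layer, which in a flat folding of the given pattern forces that crease to span the full width (or full height) of $[0,W]\times[0,H]$; and a further short argument (once one full-width horizontal crease is folded, layers have reduced height or, after folding a full-height vertical crease, reduced width, so no perpendicular full-span crease can subsequently be folded all at once with a single layer) rules out horizontal and vertical creases coexisting. I expect this invariant --- in particular arguing cleanly that a chord of a layer really is forced to be full width, and tracking how the layer decomposition evolves under simple folds --- to be the main obstacle; once it is in place, the characterization follows and the rest is Theorem~\ref{thm:1dchar} together with Corollary~\ref{cor:1d mixed one}.
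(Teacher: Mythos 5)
Your proposal is correct and follows essentially the same route as the paper: observe that a flat-foldable one-layer rectangular pattern cannot contain both horizontal and vertical creases (since intersecting creases cannot both be folded one layer at a time), and then reduce the all-parallel case to the 1D algorithm of Corollary~\ref{cor:1d mixed one}. The paper states this in two sentences, whereas you additionally spell out the layer-decomposition invariant and the full-span requirement that the paper leaves implicit.
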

\begin{proof}
  An orthogonal crease pattern on rectangular paper that contains both
  horizontal and vertical creases is never flat-foldable in the one-layer model,
  because we cannot fold two intersecting creases with one-layer simple folds.
  On the other hand, if there are only creases in one direction, then the
  crease pattern is equivalent to a 1D mixed creased pattern, so we can use the
  algorithm from Theorem~\ref{thm:1d mixed foldable}.
\end{proof}

\section{1D Paper, Mixed Assignment, All-Layers Folds}
\label{sec:1D mixed all}

In this section, we describe a polynomial-time greedy algorithm for mixed 1D crease patterns in the all-layers model.


In all-layers simple folding, layers can never be separated once they come into
contact, so we can view the paper as getting glued together where it overlaps.
Thus a \defn{valid fold} in 1D
must fold each valley (respectively mountain) crease
onto a mountain (respectively valley) or unassigned crease,
or onto a point not interior to the paper:
folding a crease onto a non-vertex point of paper (neither crease nor end),
or a crease with an incompatible (equal) assignment,
would prevent future all-layers folding of the crease.
(Equal assignments are incompatible
because one crease gets flipped upside-down during a simple fold.)
Equivalently, we can discard the shorter portion of the paper
on either side of the crease,
reducing the paper to a shorter line segment with fewer creases \cite{map}.
See Figure~\ref{1d greedy fail}(a, c).

\begin{figure}
  \centering
  \includegraphics[scale=2]{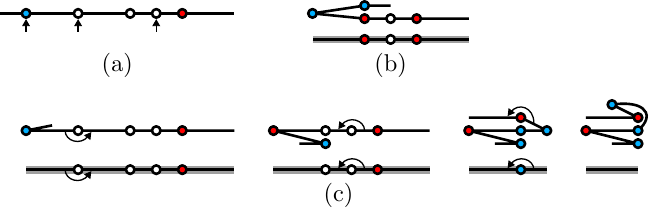}
  \caption{An example instance of 1D mixed all-layers simple folding,
    demonstrating the failure of the simple greedy algorithm.
    (a) Crease pattern, where
    blue, red, and gray dots denote valleys, mountains, and unassigned
    creases, respectively.
    Arrows indicate all valid and all plausible creases.
    (b) If we first fold the leftmost unassigned crease,
    then the result cannot be folded, because it would require
    folding together creases with incompatible assignments.
    (c) Folding the valley crease on the left first
    allows for a valid folding.
    Below the actual foldings we draw the reduced 1D equivalents.}
  \label{1d greedy fail}
\end{figure}

In the fully assigned and fully unassigned cases, these ideas
lead to the greedy algorithm presented in \cite{map,infinitealllayers}:
at each step, perform any valid fold.
For any valid fold, the new paper must be a subset of the original paper,
with no assignments changed.
Thus, all valid folds preserve foldability,
and the greedy algorithm will always succeed if there is a solution. 

Unfortunately, this greedy algorithm can fail in the mixed case.
Folding an assigned crease onto an unassigned crease effectively causes
it to become assigned.
In fact, a foldable crease pattern can become unfoldable
after making a valid fold.
For example, in Figure~\ref{1d greedy fail},
folding the leftmost unassigned crease as in Figure~\ref{1d greedy fail}(b)
results in an unfoldable crease pattern.

To fix this problem with mixed crease patterns,
we give a more refined greedy algorithm.
Call a crease \defn{plausible} if it would be a valid (first)
fold if every crease were unassigned.
That is, a crease $P$ is plausible if the locations of creases
(ignoring assignment) are symmetric around $P$
up to (and not including) the nearest end of the paper.
For example, Figures~\ref{1d greedy fail}(a) and~\ref{fig:interval}
mark the plausible creases with arrows.
(These examples of plausible creases are also all valid folds,
though they need not be, if symmetric creases have equal assignments. 
The center crease in the bottom crease pattern of Figure~\ref{1d greedy fail}(b) is an example of a plausible crease that is not valid.)

\begin{lemma} \label{lem:plausible}
  Let $P$ be a plausible crease with minimum distance to an end of the paper.
  If there is a sequence of valid all-layers folds that folds the crease
  pattern, then there is such a sequence that folds $P$ first.
\end{lemma}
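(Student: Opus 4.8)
The plan is to show that any valid folding sequence can be rearranged so that $P$ is folded first, by an exchange argument. Let $P$ be a plausible crease at minimum distance from an end of the paper; by symmetry assume $P$ is the nearest plausible crease to the left end, and let $L$ denote that left end. Plausibility of $P$ means the crease locations (ignoring assignment) are symmetric about $P$ in the interval from $L$ up to $P$ and the mirror region to the right of $P$. Suppose we have a valid sequence $F_1, F_2, \ldots, F_m$ that folds the whole pattern; I want to produce a valid sequence starting with the fold at $P$.

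**First I would** analyze what the first fold $F_1$ in the given sequence can be. Since $F_1$ is a valid all-layers fold of the original flat paper, it folds at some crease $Q$; the paper to one side of $Q$ must map (as a point set, ignoring creases and assignments, but respecting that overlapping creases have compatible assignments) into the paper on the other side. The key structural claim is that $Q$ must be ``at least as far'' from some end as $P$ is from its end — more precisely, because $P$ is the plausible crease closest to an end, $Q$ cannot lie strictly between $L$ and $P$ (any crease in that open region, when folded, would require the shorter side to embed in the longer side, which by minimality/plausibility of $P$ is impossible unless that crease were itself plausible and closer to $L$). So either $Q = P$, or $Q$ is far enough from every end that folding $P$ ``first'' does not interfere with it.

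**The main case** is when $Q \neq P$. Here I would argue we can insert the fold at $P$ at the front of the sequence. Folding $P$ first discards the (weakly shorter) left portion $[L,P]$, which, by plausibility, lands exactly on top of paper that survives — and crucially, the creases in $[L,P]$ map onto creases in the adjacent mirror region, all with compatible assignments (this compatibility is exactly the content of ``$P$ is a valid fold,'' which must hold if the original pattern is foldable at all: any foldable mixed pattern in which $P$ is plausible has $P$ valid, since otherwise two equally-assigned symmetric creases would have to be folded together later). After this first fold at $P$, the remaining paper is a sub-segment of the original with a sub-pattern of creases (with some unassigned creases possibly now forced, but consistently with a global solution), and the tail $F_1, \ldots, F_m$ can be ``projected'' onto it: each $F_i$ folds at a crease that still exists, at the same location, and validity is preserved because gluing only identifies points that the full folding already identified. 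The subtle point is that $F_1$'s fold at $Q$ still makes sense after the $P$-fold — it does, because $Q$ was not strictly inside the discarded region.

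**The hard part will be** handling the interaction between the newly-identified layers (after folding $P$) and the validity of subsequent folds: once $[L,P]$ is glued down, an unassigned crease in the surviving region might get an induced orientation, and I must check no later fold $F_i$ tries to fold it against an incompatible crease. I expect to resolve this by observing that the \emph{original} valid sequence already folded those creases compatibly — the $P$-fold merely performs, earlier, an identification that the full folding performs eventually — so no new conflict is created; formally, the folded image after $F_1, \ldots, F_m$ (with $P$ prepended) is the same flat folded state as before, just reached in a different order. I would also need the observation, analogous to the discussion before Theorem~\ref{thm:1dchar}, that after folding $P$ the discarded creases stay buried and never need refolding, so the reduced pattern is genuinely equivalent. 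Once these pieces are in place, induction on the number of creases (as in the proof of $5 \implies 1$) packages the exchange argument into the full greedy correctness, but Lemma~\ref{lem:plausible} itself only requires the single-step exchange.
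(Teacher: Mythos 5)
There is a genuine gap, and it sits exactly where the lemma's real content lies. Your argument that $P$ must be a valid fold --- ``otherwise two equally-assigned symmetric creases would have to be folded together later'' --- presupposes that the creases symmetric across $P$ within the interval $I_P$ of radius $d$ around $P$ are eventually identified \emph{with each other}. That is only true if the first fold that ever lands inside $I_P$ is at $P$ itself, while $I_P$ is still geometrically intact; if some other crease $Q \in I_P$ were folded first, the identifications inside $I_P$ would pair creases differently and no contradiction with equal assignments arises. The paper's proof is devoted precisely to excluding that scenario: if the first fold in $I_P$ is at $Q \neq P$, then $Q$ is plausible within $I_P$, so by the symmetry of $I_P$ about $P$ its reflection $Q' = 2P - Q$ is also plausible within $I_P$; one of $Q, Q'$ lies left of $P$, and since the left end of $I_P$ is the true left end of the paper, that crease is plausible globally and strictly closer to an end than $P$ --- contradicting minimality. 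Your case analysis misses this: you rule out a first fold in $(L,P)$ but not in $(P, L+2d)$, where ``folding $P$ first does not interfere'' is simply false, and you only examine $F_1$ rather than the first fold in the whole sequence that touches $I_P$ (which may come much later, after many folds elsewhere).

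The second half of your plan --- prepend the $P$-fold and ``project'' the remaining sequence --- is also where the paper deliberately takes a different, safer route, and your sketch underestimates the danger. Order-independence of valid folds is exactly what \emph{fails} for mixed patterns (Figure~\ref{1d greedy fail}); an identification performed earlier can force an unassigned crease to an orientation that invalidates a later fold. The paper avoids re-proving any reordering statement in the mixed setting: it takes the full mountain/valley assignment induced by the witnessing sequence $F$, observes that antisymmetry of this assignment across $P$ on $I_P$ (which follows because $f_i$ is the first fold in $I_P$) makes $P$ a valid fold of that \emph{fully assigned} pattern, and then invokes the already-established fact that for fully assigned patterns every valid fold preserves foldability, so $P$ may be folded first. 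If you want to salvage your projection argument, you would in effect have to introduce this induced assignment anyway to certify that no later fold meets an incompatible orientation; as written, the ``no new conflict is created'' step is an assertion, not a proof.
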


\begin{proof}
Assume by symmetry that $P$ is on the left half of the paper
(including the midpoint), and thus is the leftmost plausible crease.
Refer to Figure~\ref{fig:interval} for an example.
Let $d$ be the distance from $P$ to the left end of the paper,
and consider the open interval $I_P$ of length $2 d$ centered at $P$.
Because $P$ is plausible, the creases in $I_P$ are symmetric around $P$.

\begin{figure}
  \centering
  \includegraphics[scale=3]{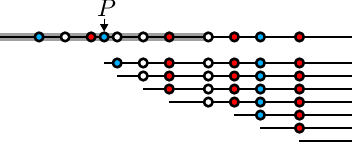}
  \caption{An example 1D mixed crease pattern,
    which can be folded using all-layers simple folds
    (as shown by the reduced folding sequence).
    Blue, red, and gray dots denote valleys, mountains, and unassigned
    creases, respectively.
    The original crease pattern has only one plausible crease~$P$,
    whose corresponding interval $I_P$ is shaded.}
  \label{fig:interval}
\end{figure}

Suppose we are given a valid all-layers folding sequence
$F=f_1, f_2, \dots, f_n$,
where each $f_i$ folds one crease in the current reduced folding,
which corresponds to a set of creases in the original crease pattern
that have been identified by previous folds.

Consider the first fold $f_i \in F$ that folds a crease in~$I_P$,
and let $Q$ be the (only) crease in $I_P$ folded by~$f_i$.
Before fold $f_i$, $I_P$ cannot have changed, except for creases
gaining assignments because of creases outside $I_P$ being folded onto them.
Thus, if we restrict attention to the paper in $I_P$,
$Q$ must be foldable in the original crease pattern as well.
In particular, $Q$ must be plausible when 
we restrict the crease pattern to~$I_P$.

Because the locations of creases in $I_P$ are symmetric around~$Q$,
the reflection $Q'$ of $Q$ across $P$ is also plausible within $I_P$.
If $Q \neq P$, then one of $Q$ and $Q'$ lies to the left of $P$.
Because the left end of $I_P$ was originally the left end of the paper,
this crease must be plausible even when considering the whole paper.
But we assumed $P$ is the leftmost plausible crease, a contradiction.
Therefore $Q=P$.
In other words, the first fold $f_i$ in $I_P$ is at~$P$.

The folding $F$ induces a mountain/valley assignment for all creases,
based on whether they actually folded mountain or valley. 
Because $f_i$ is the first fold in $I_P$, this full assignment restricted
to $I_P$ must be antisymmetric (map mountains to valleys and vice versa)
when reflected across~$P$.
Therefore $P$ is a valid fold in this fully assigned crease pattern.
Because valid all-layers folds can be made in any order
for assigned crease patterns,
there is a valid sequence of all-layers folds that folds $P$ first.
\end{proof}

Note that the plausible crease $P$ in Lemma~\ref{lem:plausible}
(one with minimal distance to an end of the paper)
still might not be valid.
But in this case, we can deduce from Lemma~\ref{lem:plausible}
that the crease pattern cannot be folded using all-layers simple folds.
Indeed, we can turn this into a refined greedy algorithm:
repeatedly find, and fold if possible, the plausible crease nearest an end.

\begin{theorem}
There is a polynomial-time algorithm to determine whether a 1D mixed crease pattern can be folded using all-layers simple folds.
\end{theorem}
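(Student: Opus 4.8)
The plan is to turn the refined greedy strategy sketched just above the statement into an algorithm and prove its correctness by induction, with Lemma~\ref{lem:plausible} as the workhorse. Explicitly: given a mixed 1D crease pattern $C$, if $C$ has no creases, declare it flat-foldable. Otherwise compute the set of plausible creases of $C$ (check each crease individually). If no crease of $C$ is plausible, declare $C$ not foldable. Otherwise let $P$ be a plausible crease of minimum distance to an end of the paper, breaking ties arbitrarily; if $P$ is not a valid all-layers fold, declare $C$ not foldable; otherwise perform the valid fold at $P$, form the reduced crease pattern $C'$ (as discussed at the start of this section, a valid fold glues the overlapping paper and discards the shorter side, yielding a mixed 1D crease pattern with strictly fewer creases), and recurse on $C'$.

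Correctness would go by induction on the number of creases. The base case (no creases) is immediate. For the inductive step I would handle three cases. First, if $C$ has at least one crease but none is plausible: every valid all-layers fold of $C$ is still valid after all assignments are erased, since erasing assignments only relaxes the compatibility condition between creases folded onto each other, and a crease admitting such a fold is by definition plausible; hence $C$ has no valid first fold and is not foldable, matching the algorithm. Second, if the chosen plausible crease $P$ is not a valid fold: by Lemma~\ref{lem:plausible}, were $C$ foldable there would be a valid folding sequence that folds $P$ first, forcing $P$ to be a valid fold, a contradiction, so $C$ is not foldable, again matching the algorithm. Third, if $P$ is a valid fold: I would show $C$ is foldable if and only if $C'$ is. The ``if'' direction just prepends the fold at $P$ to a valid folding sequence for $C'$; for ``only if'', if $C$ is foldable then Lemma~\ref{lem:plausible} supplies a valid folding sequence folding $P$ first, and its remaining folds form a valid folding sequence for $C'$. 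Since $C'$ has strictly fewer creases, the inductive hypothesis says the recursive call decides $C'$, hence $C$, correctly.

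For the running time: each iteration strictly decreases the number of creases, so there are at most $n$ iterations. Inside an iteration, testing whether a given crease is plausible reduces to checking symmetry of the crease positions out to the nearest end, which takes $O(n)$ time, so finding the nearest-an-end plausible crease takes $O(n^2)$; testing validity of $P$ and carrying out the reduction each take $O(n)$. This yields an $O(n^3)$ bound overall, which is the polynomial-time claim.

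The substantive content is already packaged in Lemma~\ref{lem:plausible}; what remains is bookkeeping, and I expect the only two points needing a little care to be: (i) the ``no plausible crease'' branch, which rests on the one-line observation that any valid fold of a mixed pattern is still valid with all assignments erased and is therefore plausible, so that this branch really does force unfoldability; and (ii) the equivalence ``$C$ foldable $\iff$ $C'$ foldable'' in the third case, which crucially uses the full strength of Lemma~\ref{lem:plausible} (reordering an arbitrary valid sequence to fold $P$ first) rather than just the validity of the single fold at $P$. I do not anticipate any genuine obstacle beyond stating these two points cleanly.
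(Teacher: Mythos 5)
Your proposal is correct and follows essentially the same route as the paper: a greedy loop that finds the plausible crease nearest an end, rejects if there is no plausible crease (since valid folds are plausible) or if that crease is not a valid fold, and otherwise folds it and recurses, with correctness resting on Lemma~\ref{lem:plausible} and an $O(n^3)$ running time from $O(n)$ rounds of $O(n^2)$ work. The only difference is that you spell out the induction and the ``$C$ foldable $\iff C'$ foldable'' equivalence slightly more explicitly than the paper does.
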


\begin{proof}
  In linear time, we can check whether a crease is plausible.
  Thus, in quadratic time, we can find all plausible creases.
  If there are no plausible creases, then the crease pattern cannot be folded,
  because valid folds (considering assignment) are certainly plausible
  (ignoring assignment).
  Otherwise, we take the (leftmost or rightmost) plausible crease $P$
  nearest an end of the paper.
  We check whether $P$ is a valid all-layers folds.
  By Lemma~\ref{lem:plausible}, if $P$ is not valid, then we can report
  that the crease pattern cannot be folded; and if $P$ is valid,
  then we can safely fold $P$ first and preserve foldability
  (if we had it to begin with).
  Then we repeat the above procedure on the reduced crease pattern
  with fewer creases, so the number of rounds is at most linear.
  Therefore, in cubic time,%
  \footnote{This algorithm can be optimized to run in quadratic time
    by searching for plausible creases in order from the outside in,
    so that when we find one $k$ creases from the left or right end,
    the fold reduces the problem by $k$ creases.
    We leave open whether the algorithm can be further optimized
    using additional ideas from \cite{map,infinitealllayers}.}
  this algorithm will either produce a valid folding sequence,
  or it will fail to find valid crease at some step,
  in which case the original crease pattern was not foldable.
\end{proof}

\section{Rectangular Paper, Mixed Assignment, Some-Layers Folds}
\label{sec:rect mixed some}

In this section, we prove that it is NP-complete to determine whether a
mixed orthogonal crease pattern on rectangular paper has a simple folding
in the some-layers model.
This result contrasts the polynomial-time algorithm of
Corollary~\ref{cor:rect mixed one} for the same problem in the one-layer model.

We adapt the reduction from \cite[Section~5]{infinitealllayers}
which shows NP-hardness of the same problem in the all-layers model.
In fact, we show that the instances described in their reduction
never permit a folding other than the intended one,
even when relaxing to the some-layers model,
and thus their reduction also applies to our case.
Because the some-layers model is more permissive than the all-layers model,
it suffices to show is that no unintended foldings are possible.

We begin with a lemma which provides a situation where the some-layers model is actually no more permissive than the all-layers model.

\begin{lemma}
  \label{some layers all layers rect}
  Whenever a simple fold is made in an orthogonal crease pattern with
  rectangular piece of paper, if it is not in the same direction
  (horizontal or vertical) as the previous fold, it must be an all-layers fold.
\end{lemma}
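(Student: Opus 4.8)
The plan is an induction on the number of folds that carries along a structural description of the folded state. By reflecting the paper horizontally or vertically, and --- using the symmetry that reverses the layer order --- turning the whole folded stack upside down if necessary, we may assume: the fold in question is vertical, along a line $\ell=\{x=d\}$ interior to the current silhouette; it folds a block of layers taken from the \emph{top} of the stack; and its moving flap lies in the half-plane $x<d$. Then the immediately preceding fold is horizontal.

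It helps first to describe valid some-layers folds combinatorially. Call a maximal crease-free piece of the folded state a \emph{cell} (an axis-aligned rectangle), and build the \emph{layer graph} whose vertices are cells and whose edges are the creases joining vertically adjacent cells. Because $\ell$ meets a cell in at most one chord and folding a cell along an interior chord tears it, a valid fold must fold each cell entirely or not at all; and a crease joining a folded cell to an unfolded one may not meet the open moving half-plane $\{x<d\}$, or the paper tears there. So a valid fold corresponds to a set $\mathcal C$ of cells that, over every point, occupies a contiguous block at the top of the stack, with every crease separating $\mathcal C$ from its complement lying in $\{x\ge d\}$; the fold is all-layers precisely when $\mathcal C$ is everything.

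The invariant I would maintain is: the silhouette is always an axis-aligned rectangle $R$, and if the most recent fold was horizontal then one of the two extremes of the stack is a single cell equal to $R$, say the bottom cell $B=R$, and every cell that appears as the top layer somewhere is joined to $B$ by a path in the layer graph all of whose creases are horizontal and span the full width of $R$ (symmetrically, a full-height vertical path, after a vertical fold). This is immediate before any fold and after the first fold, which is forced to be all-layers. Granting it just before the fold in question (whose predecessor is horizontal): if that fold is non-trivial, $\mathcal C$ is nonempty, so it contains some cell $T$ that is the top layer over a point with $x<d$; by the invariant $T$ reaches $B$ along full-width horizontal creases, each of which crosses $\ell$ (as $\ell$ is interior to $R$) and therefore, by the combinatorial description, cannot separate $\mathcal C$ from its complement. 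Hence $B\in\mathcal C$; since $B=R$ lies at the bottom of the stack and $\mathcal C$ is a top block, $\mathcal C$ must be everything, i.e.\ the fold is all-layers. This proves the lemma, and also shows the only new state to be tested against the invariant is the one produced by an all-layers direction-switching fold.

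The crux, and the main obstacle, is preserving this invariant. For a same-direction fold this is largely bookkeeping: such a fold leaves untouched the coordinate perpendicular to it, so $R$ stays a rectangle of the same width, the full-width creases of the relevant paths are merely carried to new heights, and a new all-layers-spanning extreme cell and new connecting paths persist --- the care needed is in following how the layer order is reversed inside the flipped block and which cell lands at each extreme. For the all-layers direction-switching fold one must establish the perpendicular version of the invariant: reflecting the entire stack across $\ell$ takes the old full-$R$ extreme cell to a full-$R$ extreme cell of the new (again rectangular) silhouette, and the newly created fold crease --- which spans the new full height --- together with the now-vertical images of the old paths provides the required full-height connections. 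Getting the orientations right through the $180^\circ$ flip --- in particular verifying that a full-$R$ cell really does persist at an \emph{extreme} of the stack --- is where this argument is most easily botched; a way to isolate that difficulty is to track, using the inductive hypothesis, the most recent all-layers fold before the current one, observe that every fold since then has the same (horizontal) orientation so its full-width crease survives (only sliding to new heights), and run the path-chasing argument off that crease and the layer order it induces.
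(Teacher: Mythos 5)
Your core mechanism is the right one, and it is the same engine the paper uses: a crease joining a moved cell to an unmoved cell must avoid the open moving half-plane, so a full-span crease perpendicular to the fold line forces the two cells it joins to move together, and connectivity of the layer graph then forces an all-layers fold. The gap is in the invariant you hang the induction on, which you yourself identify as the crux and defer as ``largely bookkeeping'': it is in fact false. Take paper $[0,4]\times[0,4]$ with vertical creases at $x=2$, $x=3$, $x=3.5$. Fold all layers at $x=2$ (left flap on top), giving cells $P$ (bottom) and $Q$ (top), each spanning $[2,4]$ in $x$. Fold the top layer at $x=3$, sending $Q$'s portion over $[3,4]$ on top; the bottom extreme is still the single cell $P$ equal to the silhouette, and your invariant holds. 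Now fold at $x=3.5$: the only layer over $[3.5,4]$ is $P$, and we fold its flap $[3.5,4]$ \emph{underneath}. This is a legal fold in every model, the silhouette shrinks to $[2,3.5]\times[0,4]$, and now the bottom extreme is $P$'s stationary part over $[2,3]$ but the new flap over $[3,3.5]$, while the top extreme is $Q$'s moved part over $[2,3]$ but $P$'s stationary part over $[3,3.5]$ --- no extreme of the stack is a single cell covering the silhouette. (The lemma itself survives here, since all three creases span the full height and cross any horizontal fold line; only your invariant breaks.) The root cause is that your up--down normalization applies only to the one direction-switching fold under scrutiny: the intermediate same-direction folds may peel flaps off \emph{either} extreme and may uncover part of the silhouette, so no single cell need persist at an extreme.

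The paper avoids this trap by not maintaining a structural description of the folded state at all: it inducts on the statement of the lemma itself, and for two layers in contact at the fold line it traces back to the fold $f'$ that brought them together. If $f'$ is perpendicular to the current fold, its (infinite-line) crease crosses the current fold line and the tearing argument applies directly; if $f'$ is parallel, the induction hypothesis supplies an intervening perpendicular all-layers fold $f''$ that moved both layers, and its crease plays the same role. To repair your argument you would need to weaken the invariant to something of this flavor --- e.g., that after a run of vertical folds the layer graph is connected by vertical creases each spanning the full height of the silhouette --- rather than asserting the existence of a full-silhouette extreme cell; with that change your path-chasing step (every connecting crease crosses the interior perpendicular fold line, so $\mathcal{C}$ must be everything) goes through essentially as you wrote it.
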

\begin{proof}
  We prove this claim by induction.
  As a base case, the first fold clearly is an all-layers fold
  because there is only one layer to fold.

  Now suppose the claim is true for all of the folds made so far.
  Assume by symmetry that the current fold is a vertical fold
  and the previous fold is a horizontal fold.

  Consider any two layers of the paper at the fold line.
  We will show that, if a fold $f$ goes through one of these layers,
  it must go through the other layer.
  Because the layers are on top of each other,
  there must be some previous fold $f'$ which folded them on top of each other.
  If $f'$ is horizontal,
  then $f'$ must intersect the current vertical fold $f$ at a point~$x$,
  because the folds are infinite.
  Around point~$x$, if only one of these layers gets folded by~$f$,
  it will tear the paper along that previous crease.
  On the other hand, if $f'$ is vertical, then by assumption
  there must have been a horizontal fold $f''$ since $f'$.
  By the induction hypothesis, $f''$ must have folded both of these layers.
  Thus, the same argument that the paper would tear
  when folding $f$ still holds.
\end{proof}

\begin{theorem}
  It is NP-complete to determine whether a mixed orthogonal crease pattern
  on rectangular paper be folded in the some-layers model.
\end{theorem}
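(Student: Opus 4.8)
The plan is to reuse, essentially verbatim, the crease patterns produced by the reduction of \cite[Section~5]{infinitealllayers}, which polynomially reduces an NP-hard (indeed strongly NP-hard) problem to all-layers mixed rectangular simple foldability. Membership in NP is routine: we may assume each simple fold is along a line containing at least one crease not yet folded, so a valid folding uses polynomially many folds, and such a sequence (each fold specified by its line together with a choice of the top or bottom $k$ layers) is a polynomial-size certificate checkable in polynomial time by simulating the folds and verifying that flatness is maintained throughout.

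For hardness, the only thing left to prove is that the reduction of \cite{infinitealllayers} remains correct when the target model is relaxed from all-layers to some-layers. One direction is free: every all-layers fold is a legal some-layers fold (take $k$ equal to the current number of layers), so any instance foldable in the all-layers model is foldable in the some-layers model. For the converse I would show that, for the particular crease patterns output by that reduction, every valid some-layers folding consists entirely of all-layers folds; then the some-layers-foldable instances are exactly the all-layers-foldable ones, and the correctness analysis of \cite{infinitealllayers} --- including the fact that these instances admit no folding other than the intended one, from which a solution to the source problem is read off --- applies unchanged.

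The main tool for this is Lemma~\ref{some layers all layers rect}: any fold perpendicular to the immediately preceding fold is forced to be an all-layers fold. The plan is therefore to trace through the construction of \cite{infinitealllayers} gadget by gadget and argue inductively that, at every step of any valid folding, the next fold must be perpendicular to the previous one --- because a fold parallel to the previous one would either fail to keep the paper flat, or would fold a crease onto a non-crease point or onto a crease with the equal assignment in a way that makes the folding impossible to complete. Combined with Lemma~\ref{some layers all layers rect}, this shows that every fold after the first (which is trivially all-layers, there being only one layer) is an all-layers fold, which finishes the reduction.

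The hard part will be exactly the case excluded by Lemma~\ref{some layers all layers rect}: a fold parallel to the previous one, where the some-layers model genuinely offers extra freedom by turning only the top or bottom $k$ layers. I expect to rule these ``cheating'' folds out gadget-locally, by pointing to a feature of each gadget of \cite{infinitealllayers} --- a crease or an end of the paper lying directly over the candidate fold line --- that would be torn, or driven onto an incompatibly assigned crease, unless the fold takes all layers; so in a completable folding such a fold is either never available or is itself forced to be all-layers. A minor additional step is to record, as a lemma, that the ``unique intended folding'' analysis of \cite{infinitealllayers} carries over once we know all folds in our setting are all-layers, so that the reduction's correctness can be cited rather than re-derived.
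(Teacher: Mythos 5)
Your proposal matches the paper's proof: both reuse the reduction of \cite{infinitealllayers} verbatim and argue that the some-layers model admits no new foldings of those instances, with Lemma~\ref{some layers all layers rect} forcing every fold perpendicular to its predecessor to be all-layers and a local conflict argument disposing of the remaining parallel-fold freedom (in the paper, the only such freedom is crimping a variable's two horizontal creases, which stacks oppositely assigned vertical creases and kills the folding). One caution: your intermediate claim that every fold in a valid folding must be perpendicular to the previous one is too strong --- the intended folding itself performs the two horizontal folds $f_i$ and $t_i$ consecutively --- but your fallback formulation (a parallel fold is either never available or is itself forced, on pain of an unavoidable conflict, to behave like an all-layers fold) is the correct one and is exactly what the paper establishes.
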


\begin{proof}
We begin with a brief overview of Akitaya et al.'s reduction;
refer to Figure~\ref{old-mixed-figure} and
see \cite[Section~5]{infinitealllayers} for the full proof.
The reduction is from 3SAT. We have a long strip of rectangular paper, divided into a square grid of creases. The paper is divided into $n$ sections each corresponding to a variable. 
The section of variable $x_i$ is between lines $f_{i-1}$ and $f_{i}'$ in Figure~\ref{old-mixed-figure}.
The lower part of a variable section contains one valley crease for each occurrence of the variable in a clause (creases adjacent to a yellow circle in Figure~\ref{old-mixed-figure}), and the upper part contain one mountain crease for each occurrence of the variable in a clause (creases adjacent to a green or red circle depending on whether the corresponding literal is positive or negative).
Additionally, there is a topmost section that checks the satisfiability of the clauses, and a set of assigned creases (on vertical lines $v_i$ and $v_i'$ for $i\in\{0,\ldots,n\}$) enforcing the variable order.

\begin{figure}[t]
  \centering
  \includegraphics[scale=1.4]{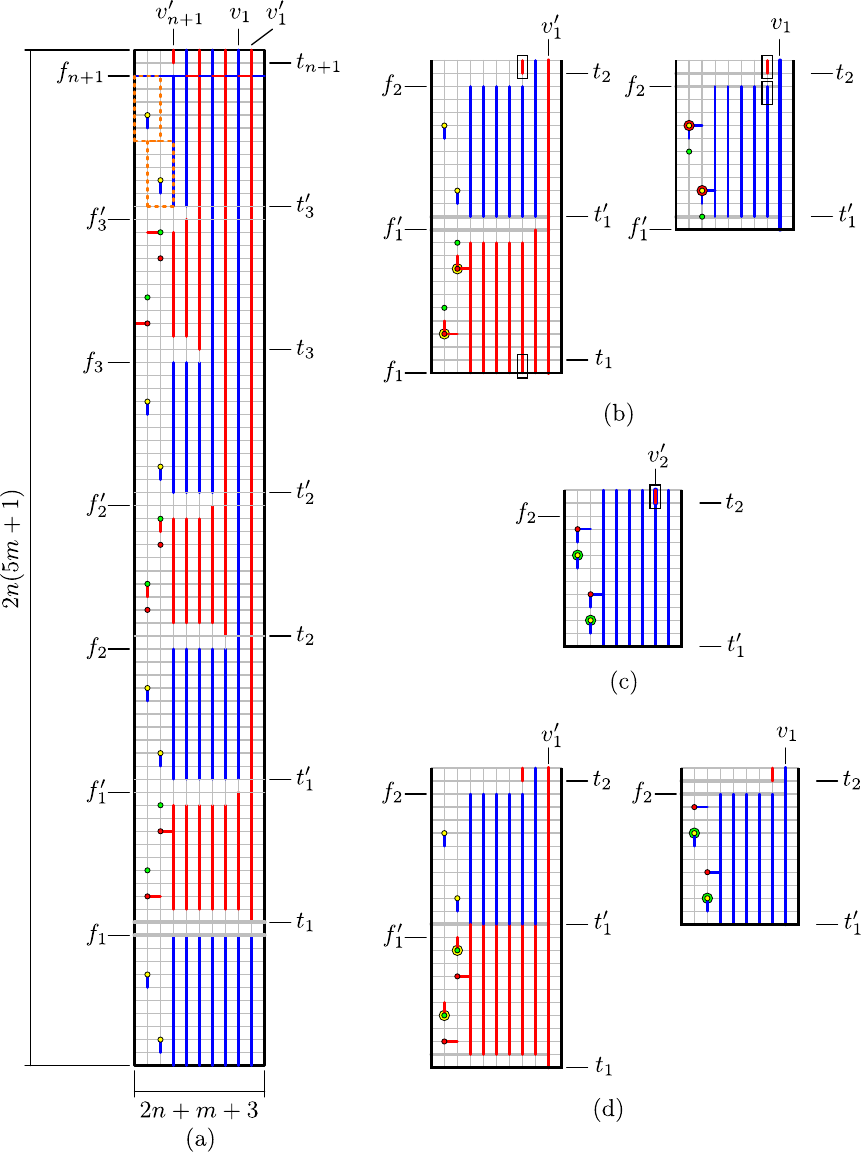}
  \caption{Figure 5 from \cite{infinitealllayers} [used with permission]:
    reduction from 3SAT to all-layers simple foldability of
    mixed-assignment orthogonal crease pattern on rectangular paper. The instance corresponds to the boolean formula $ (x_1 \vee \overline{x_2} \vee x_3) \vee (x_1 \vee \overline{x_2} \vee \overline{x_3})$.}
  \label{old-mixed-figure}
\end{figure}

At each step of the reduction, there are two adjacent unassigned creases corresponding to setting the variable true or false ($f_i$ and $t_i$), either of which can be folded first, which will fold this variable's assigned creases into the next section. Which of these creases is folded first determines whether the variable is assigned \textsc{true} or \textsc{false}. At the end, any clause with three False variables will have four valley folds coming from a single vertex, which is unfoldable in any model. After folding one of the horizontal creases in a section, the only legal fold is the rightmost remaining vertical crease. This forces the variables to be folded in order and prevents any other foldings.

During the variable assigning stage of the reduction, at each point, the only legal folds are the two horizontal folds corresponding to that variable. The only fold that is allowed in the some-layers model, and not in the all-layers model, is folding \emph{both} of the creases for a variable in a crimp. However, the entire length of this crimp will contain two oppositely assigned vertical creases on top of each other, immediately making it unsolvable.

After each pair of horizontal folds for a variables, there is a vertical fold.
By Lemma~\ref{some layers all layers rect},
we know that each of these vertical folds is an all-layers fold.
The following variable's first fold is also an all-layers fold,
so later variables do not have any more flexibility than earlier ones.

At the end of the variable folding, the clauses must also be folded. Because of the alternating vertical and horizontal folds from the variable folding, all of the folding of the clauses will have to fold all of the layers of the paper from the variable folding by Lemma~\ref{some layers all layers rect}. Every \textsc{no} instance in the reduction cannot be folded because of a degree-4 vertex with all valley assignments. Because we cannot separate the layers with these valley folds from each other, there is no way to fold this even in the some-layers model because the number of mountains and valleys must differ by exactly two around any vertex.

Thus, every \textsc{no} instance in their reduction is not foldable in the some-layers model, so the reduction still shows NP-hardness even in the some-layers model.
\end{proof}

\section{Arbitrary Paper, No/Full Assignment, Finite Folds}
\label{sec:(un)assigned finite}

In this section, we prove that it is NP-complete to determine whether an unassigned or assigned orthogonal crease pattern on arbitrary 2D paper has a simple folding.
This result applies to all three models of simple folds: one-layer, some-layers, and all-layers.
Our proof is based on modifying the construction in \cite{simple}, which reduces 3-Partition to the assigned problem.
However, this reduction has a bug, so we start with a description
of the bug (Section~\ref{sec:assigned bug})
and a fix (Section~\ref{sec:assigned fix}),
before adapting the proof to the unassigned case
(Section~\ref{sec:unassigned finite}).

\subsection{Bug in Previous Assigned Reduction}
\label{sec:assigned bug}

We first give a brief overview of Akitaya et al.'s reduction from 3-Partition
\cite{simple}.
Recall the \defn{3-Partition} problem: given $n=3 m$ numbers
$a_1, \dots, a_{3 m}$, divide them into $m$ triples of equal sum, namely,
$t = (\sum_{i=1}^n a_i)/3$.

\begin{figure}
  \centering
  \scalebox{0.8}{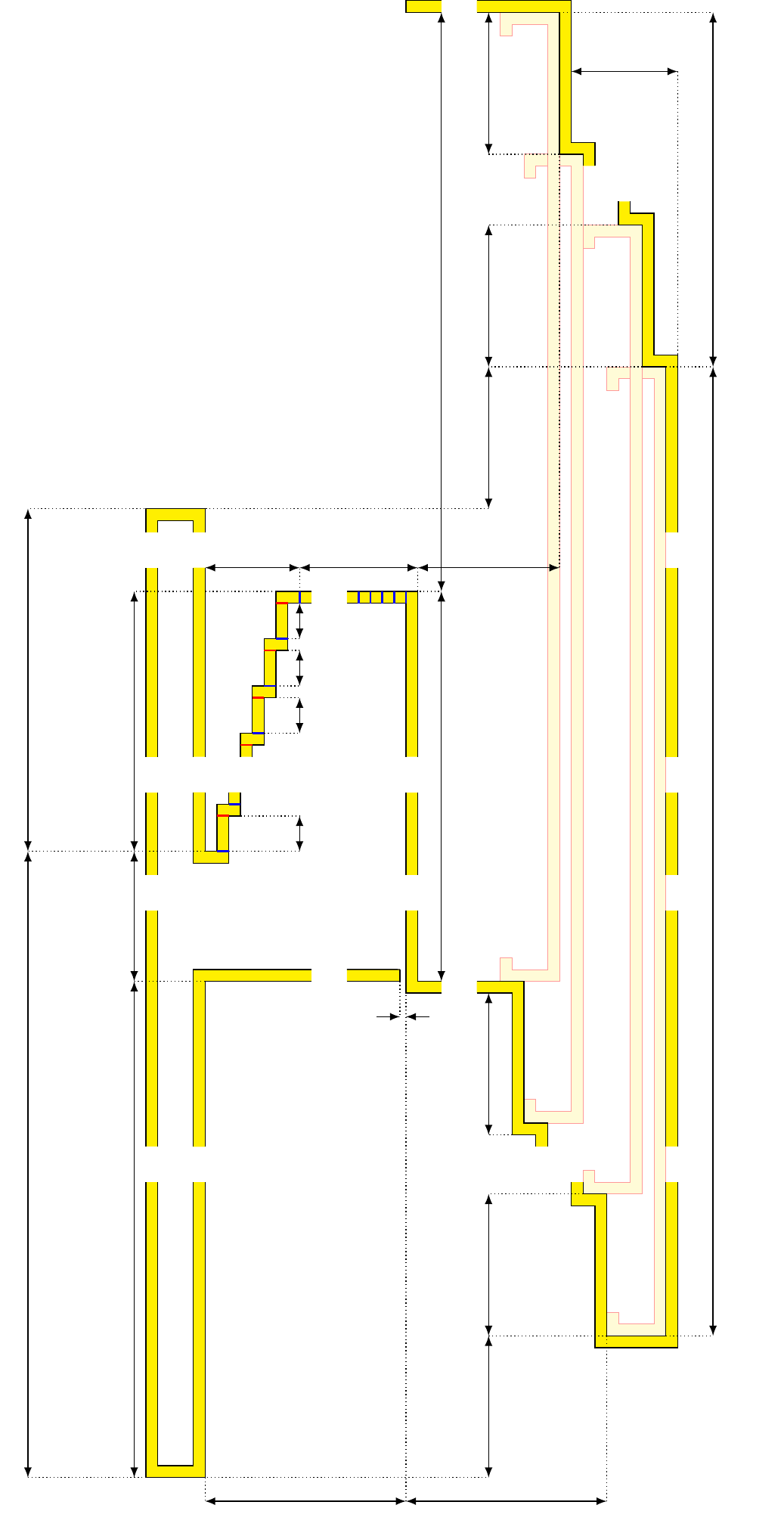}
  \caption{Figure 2 from \cite{simple} [used with permission].
    Buggy reduction from 3-Partition to assigned orthogonal crease pattern
    simple foldability with arbitrary paper.}
  \label{old-orthogonal-finite-figure}
\end{figure}


The construction consists of six parts, shown in Figures~\ref{old-orthogonal-finite-figure} and~\ref{old-orthogonal-finite-steps-figure}, which are called the Bar, the Staircase, the Wrapper, the Column, the Cage, and the Arm.
The Bar is a very long rectangle of paper with no creases.
The Staircase has a series of vertical segments with lengths $a_i$, for each $a_i$ in the 3-Partition instance, which are separated by horizontal creases, alternating mountain and valley.
The Wrapper is a horizontal rectangle with $2 m$ valley creases,
and crease spacing equals the width of the adjacent vertical Column;
thus, it wraps around that Column repeatedly.
The Cage has no creases and contains two ``staircases'', where each segment is of height $2t$. 
The Arm has no creases and is a long horizontal bar below the Wrapper attached to the Bar.

\begin{figure}[t]
  \centering
  \includegraphics[width=\linewidth]{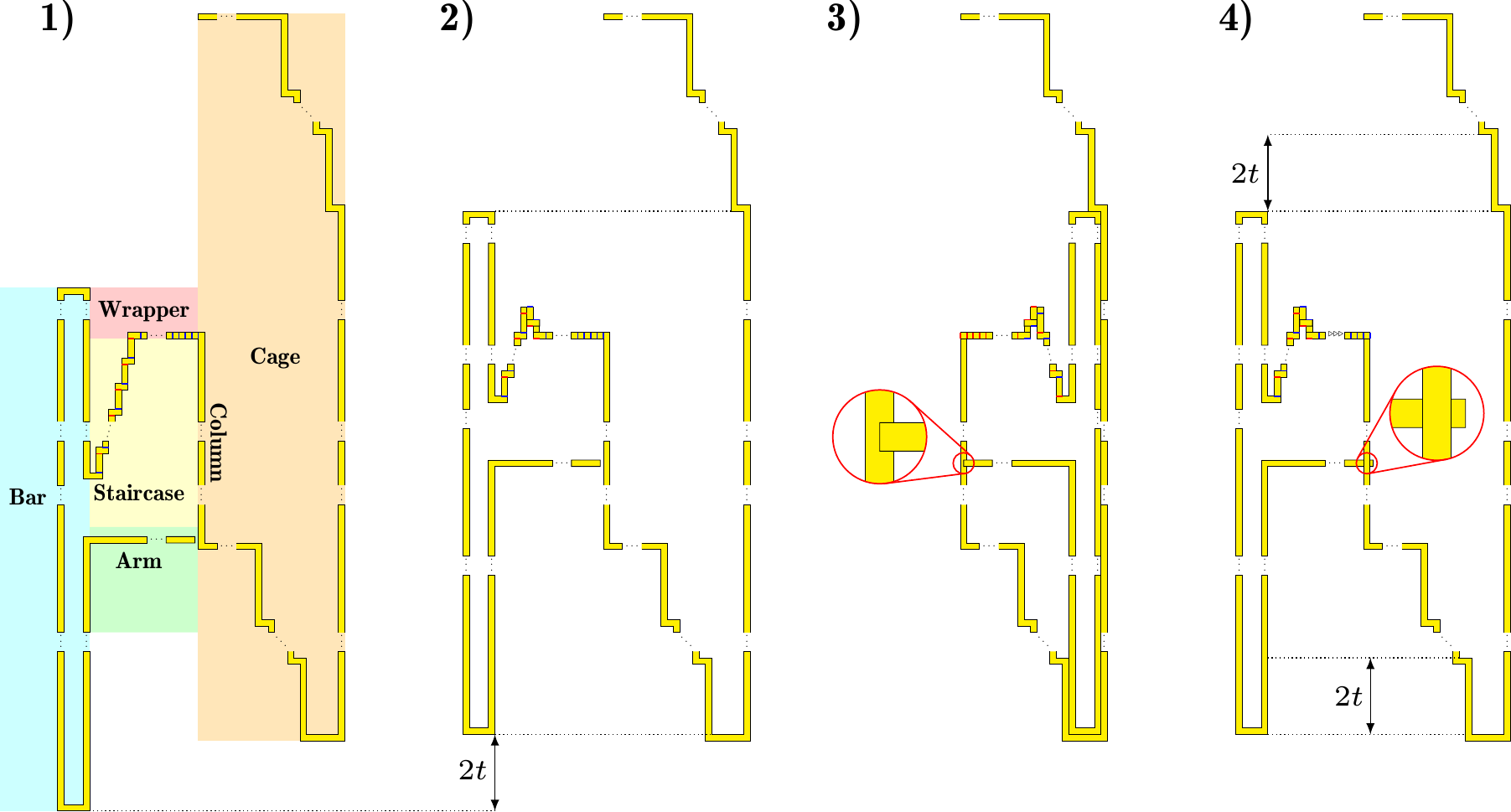}
  \caption{Figure 3 from \cite{simple} [used with permission].
  Process to implement a 3-Partition solution:
  1)~crimp variables to change height of bar by $2t$;
  2)~fold along the rightmost Wrapper crease around the Column;
  3)~fit the Bar through the Cage folding the Bar to the left along the next
  Wrapper crease; and
  4)~repeat until $n/3$ triples adding to $2t$ have been checked.}
  \label{old-orthogonal-finite-steps-figure}
\end{figure}

In order to fold this construction in the intended fashion, the Bar must pass through the Cage without colliding with it each time two of the creases on the Wrapper is folded. In order to achieve this, the vertical position of the Bar must align with the staircases in the Cage. Because each successive step in the Cage is $2t$ higher than the previous, they require folding $t$ of total length of segments in the Staircase each time. This can only be achieved if the 3-Partition instance has a solution. Figure~\ref{old-orthogonal-finite-steps-figure} shows one step of folding the Bar through the Cage after crimping $2t$ total height of the Staircase segments. 

Unfortunately, there is an unintended folding that avoids vertically aligning
the Bar with each successive step of the Staircase.
Refer to Figure~\ref{old-orthogonal-bug}.
Instead of folding two creases of the Wrapper in succession,
we fold just one of those creases.
Suppose that we have folded an odd number of Staircase creases,
and that the Bar is vertically higher than the corresponding step of the Cage.
Then the Bar will overlap the top part of the Cage,
preventing a second Wrapper crease from being folded.
But it is possible to instead fold a horizontal valley crease of the Staircase,
provided it is below the top of the Cage.
This fold might not cause a collision if it pulls the overlapping part of the Bar upwards, away from the Cage. Note that the rotation axis in Figure~\ref{old-orthogonal-bug}(c) is below the overlap between the Bar and the Cage.
Assume that the Bar continues to overlap the top part of the Cage.
Such overlap can only happen if the originally top part of the bar (which was below the rotation axis of the last fold) is now beneath the cage. See Figure~\ref{old-orthogonal-bug}(d).
This allows a second Wrapper crease (which is a mountain in this view)
to be folded.
In this way, we can avoid solving 3-Partition in at least some cases.

\begin{figure}
  \centering
  \includegraphics[width=\linewidth]{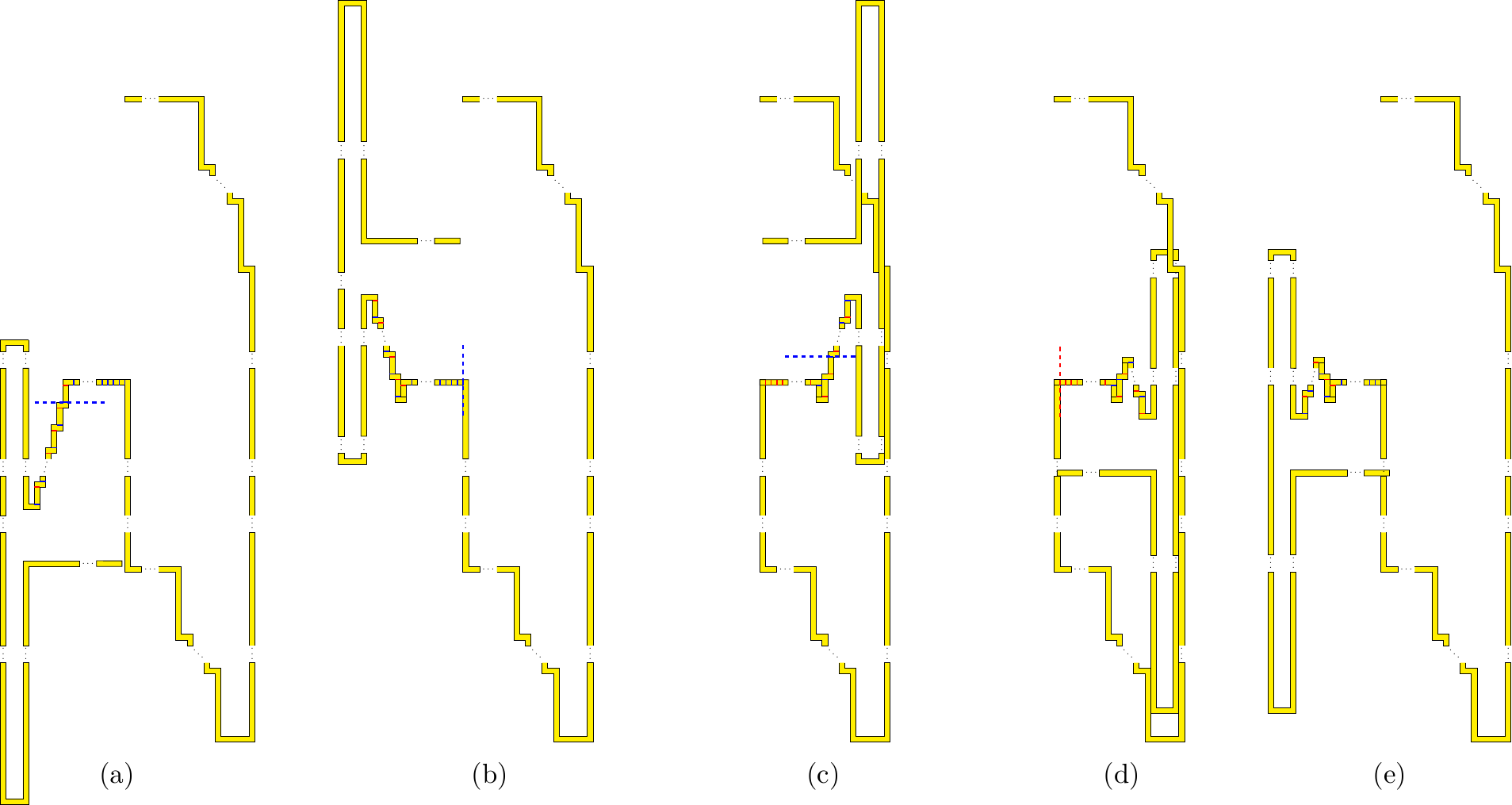}
  \caption{Unintended folding which breaks the reduction from
    Figures~\ref{old-orthogonal-finite-figure}
    and~\ref{old-orthogonal-finite-steps-figure}. The axis of the folds are highlighted with dashed blue and red lines.
    First, we fold some of the Staircase creases (a).
    Second, we fold a first Wrapper crease (b).
    Third, we fold a Staircase crease to bring the Bar below the Cage (c).
    Fourth, we fold a second Wrapper crease (d).}
  \label{old-orthogonal-bug}
\end{figure}

\subsection{Assigned Reduction}
\label{sec:assigned fix}

Next we show how to correct the proof of this theorem from \cite{simple}:

\begin{theorem} \label{thm:assigned fix}
It is NP-complete to determine whether an assigned orthogonal crease pattern on arbitrary (or orthogonal) paper can be folded in each of the one-layer, some-layers, and all-layers models.
\end{theorem}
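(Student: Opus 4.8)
The plan is to reuse essentially all of the construction from \cite{simple} (the Bar, Staircase, Wrapper, Column, Cage, and Arm of Figure~\ref{old-orthogonal-finite-figure}), keeping the reduction from 3-Partition, but to modify the geometry so that the unintended folding identified in Section~\ref{sec:assigned bug} becomes impossible. The intended folding already witnesses that every \textsc{yes}-instance of 3-Partition yields a foldable crease pattern, in all three layer models; so the entire content of the fix is to argue that, after the modification, \emph{every} legal folding sequence must follow the intended schedule, and hence only \textsc{yes}-instances are foldable.

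First I would isolate exactly what went wrong: in the buggy folding, after folding a single Wrapper crease, one folds a horizontal Staircase valley crease whose rotation axis lies \emph{below} the overlap region between the Bar and the top of the Cage, pulling the offending part of the Bar up and out of the way so that a second Wrapper crease can then be folded (Figure~\ref{old-orthogonal-bug}). The key geometric feature the cheat exploits is that the Bar is thin relative to the Cage steps, so there is ``room'' for a Staircase fold to reposition it. The fix I would propose is to add a locking feature that forces the Bar, whenever it is between two Wrapper folds, to occupy a vertical extent that straddles the Cage step in a way no single Staircase fold can undo --- concretely, thicken the relevant part of the Bar (or add a rigid no-crease flap hanging off it, or a matching no-crease tab on the Cage) so that any horizontal Staircase fold applied while the Bar overlaps the Cage necessarily creates a collision between paper layers. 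One then checks, by the same layer-order/non-crossing bookkeeping used in \cite{simple}, that with this feature the only fold available after one Wrapper crease is the matching second Wrapper crease (or, before any Wrapper fold, a Staircase crimp), so the folding order is forced to be: crimp a set of Staircase creases of total height exactly $2t$, fold a pair of Wrapper creases to pass the Bar through the next Cage step, repeat. Since the $k$-th Cage step sits $2tk$ higher than the first, the $k$-th crimping round must consume Staircase segments of total length exactly $t$; a complete folding therefore partitions $\{a_1,\dots,a_{3m}\}$ into $m$ groups each summing to $t$. (Because the $a_i$ are taken in the standard 3-Partition range $t/4 < a_i < t/2$, each such group has exactly three elements, so this is a genuine 3-Partition solution.)

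The argument that the forbidden move is genuinely forbidden must be done uniformly for all three models --- one-layer, some-layers, and all-layers. Here I would follow the structure already in the excerpt: a one-layer or some-layers fold is only ``more permissive'' than an all-layers fold when it could fold a strict subset of the layers, and I would argue (as in the proof of Theorem on some-layers rectangular folding, mutatis mutandis) that at every stage of this construction the stack of layers at any candidate fold line behaves rigidly enough that folding fewer than all layers either tears the paper (because of a crossing crease in a lower layer) or is geometrically blocked by the locking feature. In particular the locking feature is crease-free, so it is moved in its entirety by any fold that touches it, and it cannot be ``folded away'' partially. Thus no unintended partial-layer fold helps either.

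The main obstacle I expect is purely geometric bookkeeping: verifying that the added locking feature does not accidentally obstruct the \emph{intended} folding sequence (it must still be possible to pass the Bar through each Cage step when 3-Partition is satisfied), and that it genuinely blocks \emph{every} deviation, not just the specific four-step cheat drawn in Figure~\ref{old-orthogonal-bug} --- one has to rule out folding a Staircase crease above the overlap, folding from the other side, interleaving Wrapper and Staircase folds in some other order, and so on. I would handle this by re-deriving, with the modified dimensions, the invariant from \cite{simple} that after any prefix of a legal folding the configuration is one of a short explicit list of ``intended'' configurations, and checking the case analysis of available next folds at each. This is the step where the proof could balloon, so I would aim to state the modified dimensions once, give the invariant, and verify the case analysis compactly using the collision/tearing criterion rather than re-drawing every intermediate state.
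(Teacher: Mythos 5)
Your proposal takes essentially the same approach as the paper: the paper's fix is precisely your second candidate, a crease-free appendage (a second Arm, placed above the Wrapper) that, once the Bar has been folded around the Column, points left from the bottom of the Bar and collides if any Staircase fold is attempted while the Bar and Cage are misaligned, while leaving the intended folding sequence unobstructed. The only substantive difference is that the paper commits to this specific placement --- and the blocking collision is with the Column rather than the Cage --- instead of leaving open the choice among thickening the Bar, adding a flap, or adding a tab (the first of which would indeed risk breaking the intended pass through the Cage, as you anticipate).
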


\begin{proof}
  Our modified reduction from 3-Partition adds a second Arm
  \emph{above} the Wrapper,
  as illustrated in Figures~\ref{new-orthogonal-finite-figure}
  and~\ref{new-orthogonal-finite-steps-figure}.
  This prevents the bad behavior in Figure~\ref{old-orthogonal-bug}:
  in the middle diagram, the new Arm 2 would point left from the
  bottom of the Bar, overlapping 
  the Column, which prevents the valley fold in the Staircase.
  Indeed, any fold through a Staircase crease would cause the Bar to collide with the Cage unless they are exactly aligned.
  Note that the added arm does not prevent the originally intended folding sequence,
  as illustrated in Figure~\ref{new-orthogonal-finite-steps-figure}.
\end{proof}

\begin{figure}
  \centering
  \includegraphics{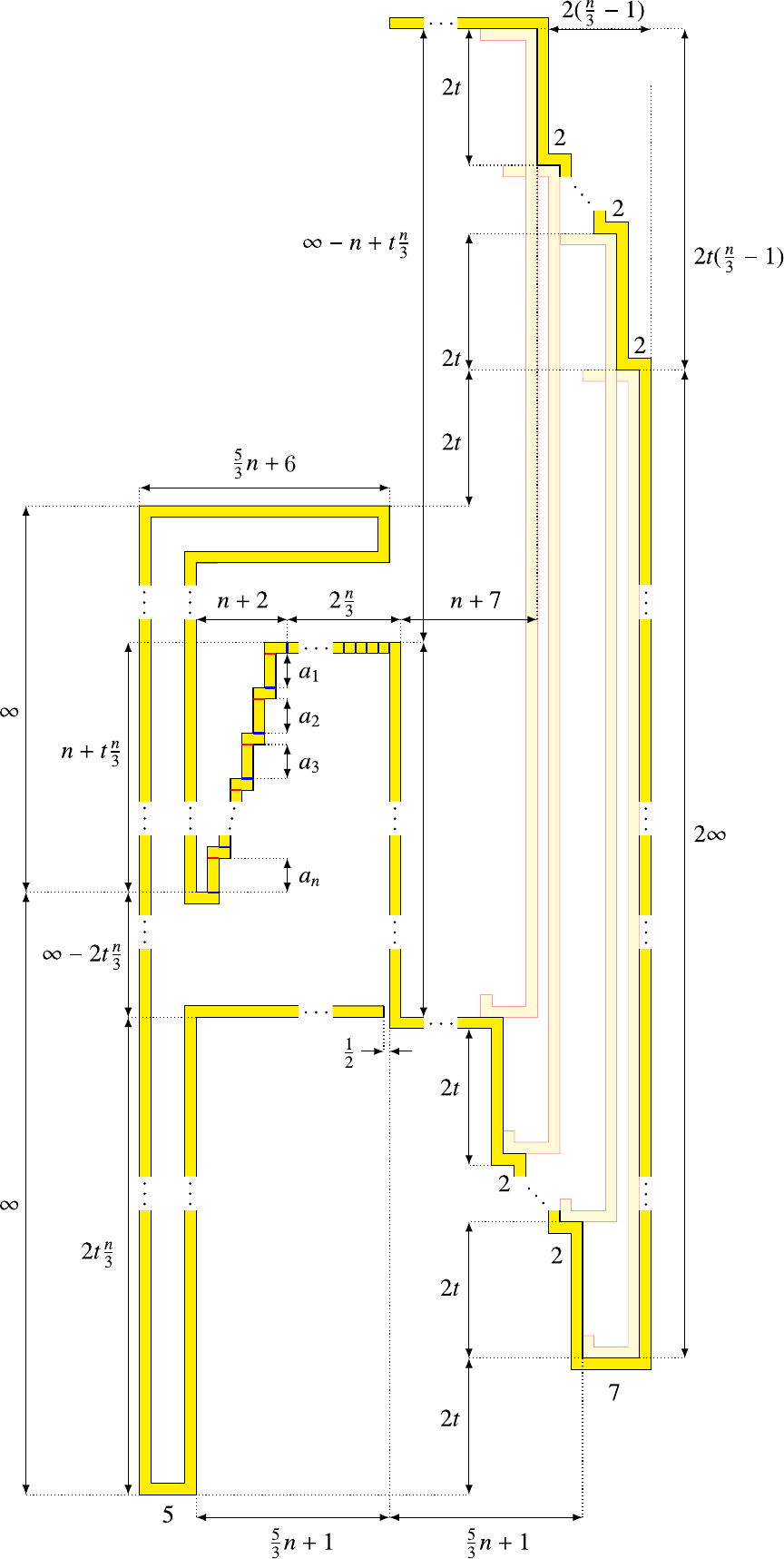}
  \caption{Reduction from 3-Partition to assigned orthogonal crease pattern
    simple foldability with arbitrary paper,
    based on Figure~\ref{old-orthogonal-finite-figure}.}
  \label{new-orthogonal-finite-figure}
\end{figure}

\begin{figure}[t]
  \centering
  \includegraphics[width=\linewidth]{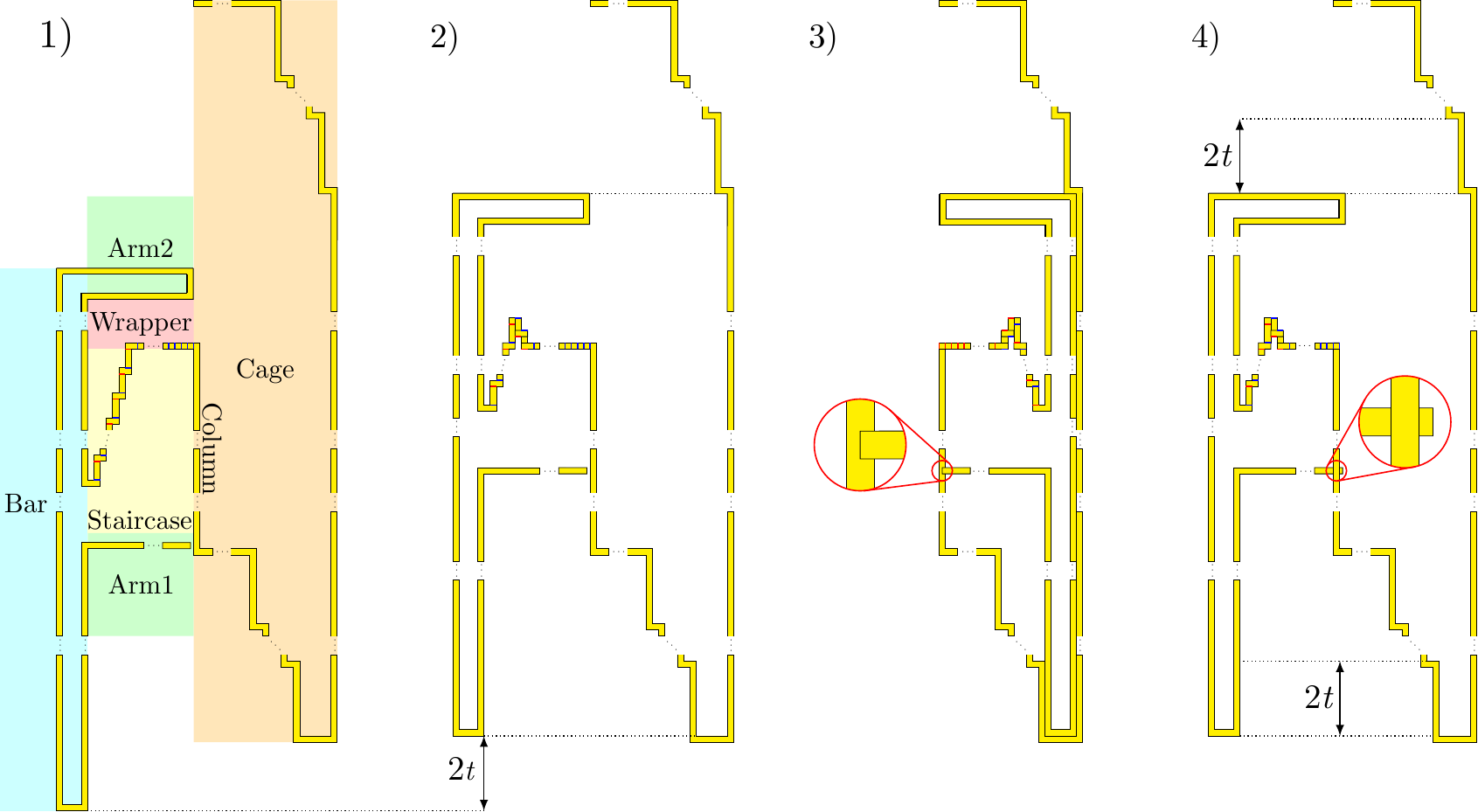}
  \caption{Intended folding sequence for the new reduction in
    Figure~\ref{new-orthogonal-finite-figure} with the added (and highlighted)
    Arm~2, based on Figure~\ref{old-orthogonal-finite-steps-figure}.
    1)~crimp variables to change height of bar by $2t$;
    2)~fold along the rightmost Wrapper crease around the Column;
    3)~fit the Bar through the Cage folding the Bar to the left along the next
    Wrapper crease; and
    4)~repeat until $n/3$ triples adding to $2t$ have been checked.}
  \label{new-orthogonal-finite-steps-figure}
\end{figure}

\subsection{Unassigned Reduction}
\label{sec:unassigned finite}

Finally we modify the proof to the unassigned case:

\begin{theorem}
It is NP-complete to determine whether an unassigned orthogonal crease pattern on arbitrary (or orthogonal) paper can be folded in each of the one-layer, some-layers, and all-layers models.
\end{theorem}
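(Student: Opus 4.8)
The plan is to reuse the corrected 3-Partition reduction of Theorem~\ref{thm:assigned fix} essentially verbatim, and to ``simulate'' its mountain/valley assignment with geometry alone. Let $C$ be the assigned orthogonal crease pattern built from a 3-Partition instance in the proof of Theorem~\ref{thm:assigned fix}; it is simply foldable in each of the three layer models iff the instance is a \textsc{yes}-instance. I would form an \emph{unassigned} pattern $C'$ by erasing all assignments from $C$ and attaching, next to each crease whose assigned direction the correctness argument actually uses, a small orientation-enforcing tab of paper (our ``assignment component''). The pattern $C'$ is still an orthogonal crease pattern on orthogonal paper, and the reduction is plainly polynomial, so the only thing to argue is that $C'$ is simply foldable iff $C$ is.

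The first step is to identify the \emph{relevant} creases --- those whose assignment the proof of Theorem~\ref{thm:assigned fix} genuinely relies on. The Bar, the two Arms, the Cage, and the Column carry no creases; the only creases are the alternating mountain/valley creases of the Staircase (whose alternation is exactly what makes a crimp change the Bar's height by the intended amount) and the Wrapper creases (which must all fold the ``same way'' so that the Bar wraps repeatedly around the Column). Every other constraint in the construction is enforced by collisions with the Cage and Column rather than by assignment, and hence is orientation-independent. So it suffices to pin down the Staircase and Wrapper creases.

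For each relevant crease $c$, I would attach a thin creaseless rectangular tab $T_c$ along a short sub-segment of $c$, on the side dictated by $c$'s intended assignment in $C$, with dimensions small enough that $T_c$ fits inside a region that no other layer ever visits during the intended folding sequence of $C$. Folding $c$ in its intended direction then simply carries $T_c$ into this private pocket, while folding $c$ the other way reflects $T_c$ to the opposite side, where it is forced to cross another part of the construction, so that the resulting state is not a valid flat folded state; and since a creaseless tab cannot be folded out of the way, no later fold can repair this. The key point is that the only fold in the construction that separates the two faces incident to $c$ is the fold at $c$ itself --- the surrounding Staircase creases are parallel to $c$, and the Wrapper folds carry a neighborhood of $c$ along rigidly --- so the fate of $T_c$ is determined exactly by the direction of $c$. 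This argument is collision-based, hence uniform over the one-layer, some-layers, and all-layers models, matching the scope of Theorem~\ref{thm:assigned fix}.

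With the relevant creases thus forced, both directions follow. Forward: from a 3-Partition solution, run the intended folding sequence for $C$; each tab just rides into its pocket, so the same sequence flat-folds $C'$ in any of the three models. Converse: in any flat simple folding of $C'$, the tabs force every relevant crease to its intended direction, so deleting the tabs exhibits a valid simple folding of the assigned pattern $C$, hence a 3-Partition solution. I expect the main obstacle to be making the tab-placement argument rigorous: one must locate, inside the rather intricate intended folded state of the construction from \cite{simple}, a genuinely private pocket for each tab; show that the ``wrong'' orientation forces a layer crossing in \emph{every} layer model (the one-layer model being the delicate case, since it admits so few folds that one must rule out sidestepping the blocked configuration); and verify that the added tabs neither obstruct the intended folding nor open a new way to cheat the 3-Partition check.
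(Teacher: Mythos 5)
Your high-level plan --- reuse the corrected assigned reduction and enforce the needed assignments with added geometry --- is the same as the paper's, but the specific mechanism you propose has a genuine gap. A mountain fold and a valley fold about the same crease $c$ map the moving paper (and any tab attached to it) to the \emph{same planar position}; the only difference is whether it lands above or below in the layer order. So the claim that folding $c$ the wrong way ``reflects $T_c$ to the opposite side, where it is forced to cross another part of the construction'' is not justified: a small creaseless tab that merely ends up on the other side of a stack of flat layers does not create a crossing, and in the some-layers (and even one-layer) models there is typically room to place it consistently. To turn a wrong fold direction into an unavoidable collision you need a \emph{global} obstruction --- paper that, under the wrong direction, must pass through the Column or the boundary of the paper --- and you have not exhibited one; you explicitly defer exactly this point (``locate a genuinely private pocket,'' ``show that the wrong orientation forces a layer crossing in every layer model''), which is the entire content of the soundness argument. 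A second, related omission: in the assigned construction the uniform valley assignment of the Wrapper is what forces the creases to be folded in right-to-left order; once the assignments are erased you must re-establish that order by some independent means before any direction-forcing gadget can be analyzed, and your proposal does not address ordering at all.

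The paper resolves both issues differently. It first observes that the Staircase assignments are \emph{not} relevant (only the folded lengths matter), so contrary to your analysis only the Wrapper creases need to be controlled. It then replaces the Wrapper by a ``Cactus'': each crease $c_i$ carries a protruding branch $a_i$ positioned so that folding $c_i$ before $c_{i-1}$ stacks $a_{i+1}$ over $a_{i-1}$ and makes $c_{i-1}$ permanently unfoldable, which forces the right-to-left order by pure incidence geometry (independent of fold direction). With the order forced, the pre-existing long Arm~1 --- not a local tab --- enforces that all Cactus creases fold the same way, because an arm that wraps around the Column the wrong way must collide with it; this is the global planar obstruction your local tabs lack. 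If you want to salvage your approach, you would essentially have to grow each tab into something long enough to interact with the Column, at which point you have rediscovered Arm~1 and still need a separate ordering gadget.
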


\begin{proof}
Our reduction is a modification of the assigned reduction of
Theorem~\ref{thm:assigned fix}.
In their paper, Akitaya et al.~\cite{simple} mention that there is only one place where the assignments are necessary: all of the creases on the Wrapper must be all valley (or all mountain).
The only other creases are those on the Staircase.
These creases are alternating in assignment, but their assignment is not relevant: if the construction is not foldable, changing the assignment does not help because all that the construction cares about is the length of the folded segments.

The uniform assignment of the Wrapper forces all of the creases on the Wrapper to be folded in order from right to left. This is necessary to ensure that the Bar passes through the Cage at each step, which checks to make sure that the 3-Partition instance was satisfied. In \cite{simple}, the authors show how to use unassigned $45^\circ$ creases to force the assignment of these creases, showing NP-hardness with unassigned $45^\circ$ creases. We will instead use unassigned orthogonal creases and particularly shaped paper to achieve this.

\begin{figure}
\includegraphics[width=\linewidth]{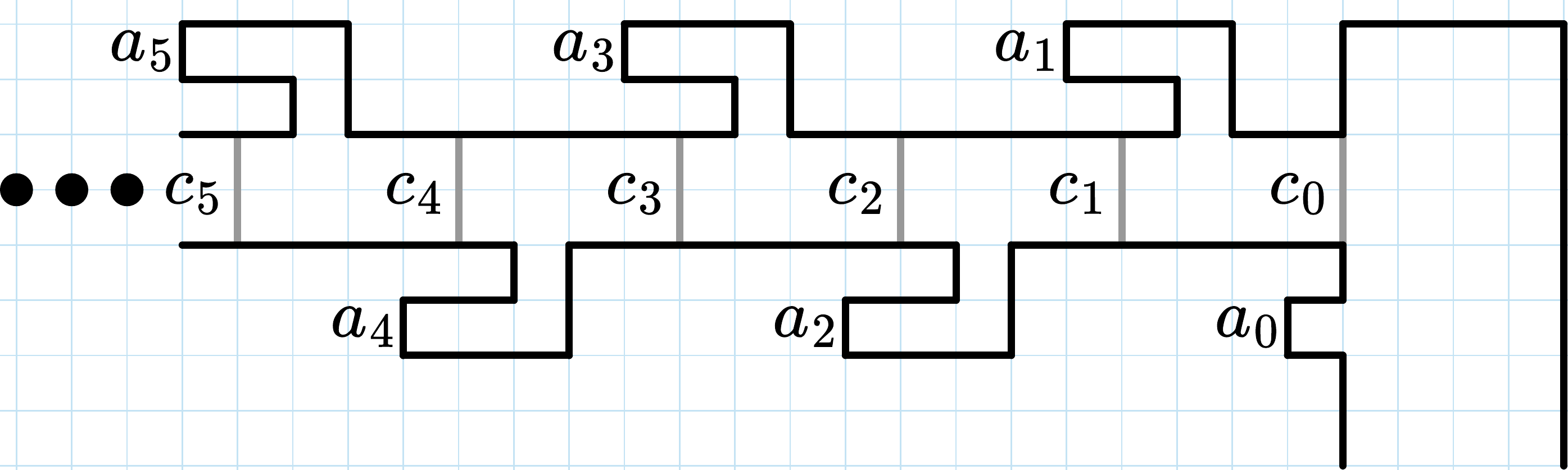}
\caption{This construction, which we call the \defn{cactus}, replaces the Wrapper. A right-to-left fold order is forced by the ``arms'' sticking out of the cactus. For example, if crease $c_1$ is folded before crease $c_0$, it will cause $a_2$ and $a_0$ to lie on top of each other, and regardless of the assignments $c_0$ will never be able to fold without $a_2$ and $a_0$ colliding.}
\label{fig:cactus}
\end{figure}

We replace the Wrapper with Figure~\ref{fig:cactus}, which we call the Cactus. On the Cactus, just before each crease $c_i$ we include a short branch $a_i$ of paper sticking out which the crosses the line containing the $c_i$. The point of these branches is to ensure that the creases are folded in increasing order, beginning with $c_0$. If any paper which is connected to the opposite side of $c_i$ is folded on top of $a_i$, then $c_i$ cannot be folded since folding it in either direction would force the paper to intersect itself. The Cactus's branches are aligned so that whenever a crease $c_i$ is folded, the previous branch $a_{i-1}$ will be aligned with the next branch $a_{i+1}$. This makes it impossible to fold $c_{i-1}$, so any valid folding must fold $c_{i-1}$ before $c_i$, and thus respect the intended order on the Cactus.

The purpose of Arm~1 is to enforce that all creases on the Wrapper are folded in the same direction, assuming the order of these creases is already forced. In particular, if the (unassigned) creases of the Wrapper are folded from right to left, the only way for Arm 1 to never collide with anything is for those creases to be either all mountain or all valley creases. This is needed to ensure that the Bar passes through the Cage on each step (see Figure~\ref{old-orthogonal-finite-steps-figure}). Akitaya et al.\ use this to prove NP-hardness for unassigned $45^\circ$ creases.

We use Arm 1 in essentially the same way. Because the creases in the Cactus are forced to be folded in the correct order, Arm 1 ensures that they are all folded with the same assignment.
Similarly, Arm 2 functions in the same way as in the previous reduction.
If the previous fold through a Cactus crease brings the Bar to overlap with the Cage, no fold through a Staircase crease is possible. 
Since we can't fold through another Cactus crease either, no folds are possible.

Finally, our construction is independent of which of the one-layer, some-layers, or all-layers model is used. Every fold in the construction is a one layer fold, so any instance that is foldable will be foldable in one-layer, some-layers, and all-layers models. 
\end{proof}

\section{Open Problems}

As mentioned in Section~\ref{sec:remaining-case}, when restricted to orthogonal
crease patterns and the parameters we consider, the sole remaining open problem
is infinite all-layers simple folds of assigned crease patterns on
arbitrary or orthogonal pieces of paper.
In this section, we present some observations and approaches to this problem.


One natural approach to proving hardness is to mimic the reduction from the
rectangular mixed-assignment problem solved in \cite{infinitealllayers}; see
Figure~\ref{old-mixed-figure}.
We can modify this reduction by removing a small neighborhood of paper
around each unassigned crease.
If the paper is still connected, then folding the resulting orthogonal
fully assigned crease pattern is equivalent to the original problem. 
Unfortunately, this process often disconnects the paper, and in particular
it does when applied to the 3SAT reduction of Figure~\ref{old-mixed-figure}.
This barrier seems difficult to overcome.

A related approach to a polynomial-time algorithm is to reduce to
the rectangular fully assigned problem, which is solvable in polynomial time.
We can naturally expand the piece of paper to its rectangular bounding box,
and extend all creases into the bounding box as unassigned creases,
as shown on the left of Figure~\ref{crease extension}.
To complete this reduction, we would need a way to assign all of these
extended creases in a way that makes the crease pattern foldable with infinite all-layers folds.
But it is not clear how to do this efficiently, and there may be complicated interactions between constraints on the assignment. Figure~\ref{orth hard examples} demonstrates that this is harder than the fully assigned case, since it is possible to make the paper not foldable by folding a crease which appears safe.

\begin{figure}
  \centering
  \includegraphics[scale=0.5]{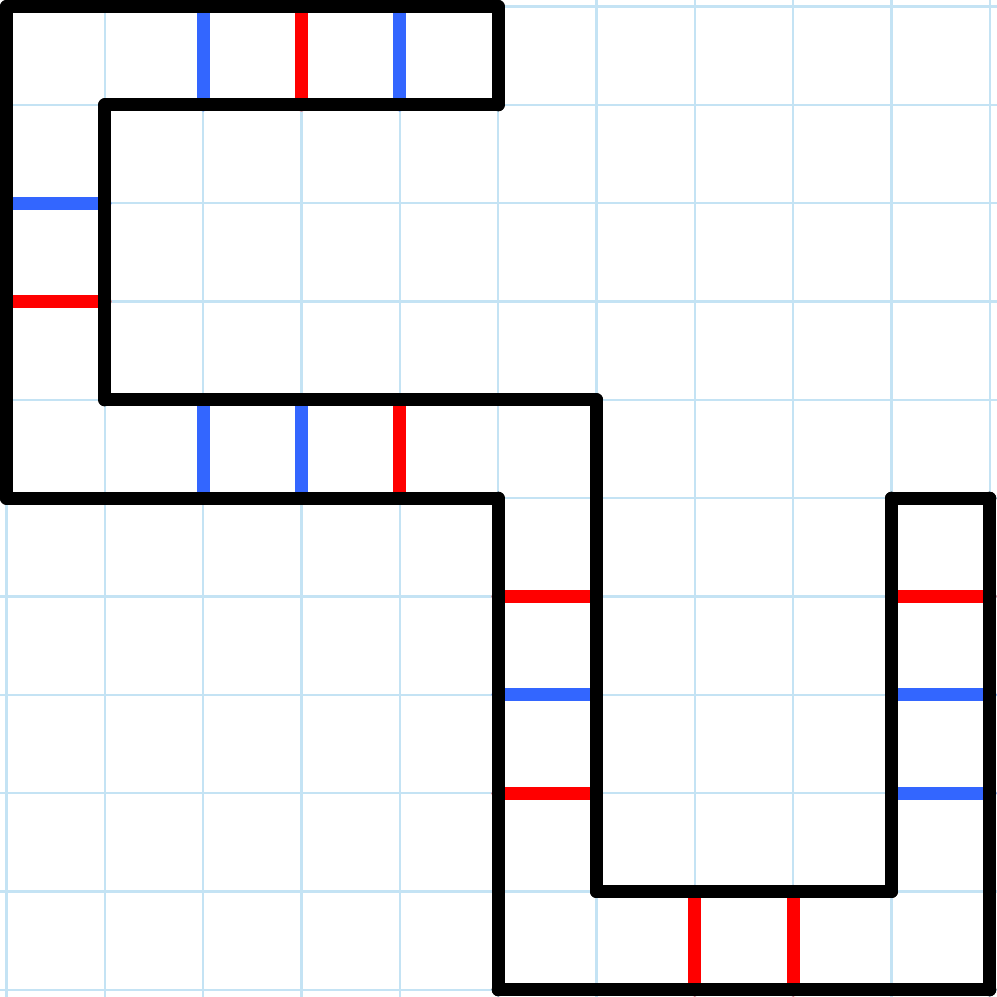}\hfil
  \includegraphics[scale=0.5]{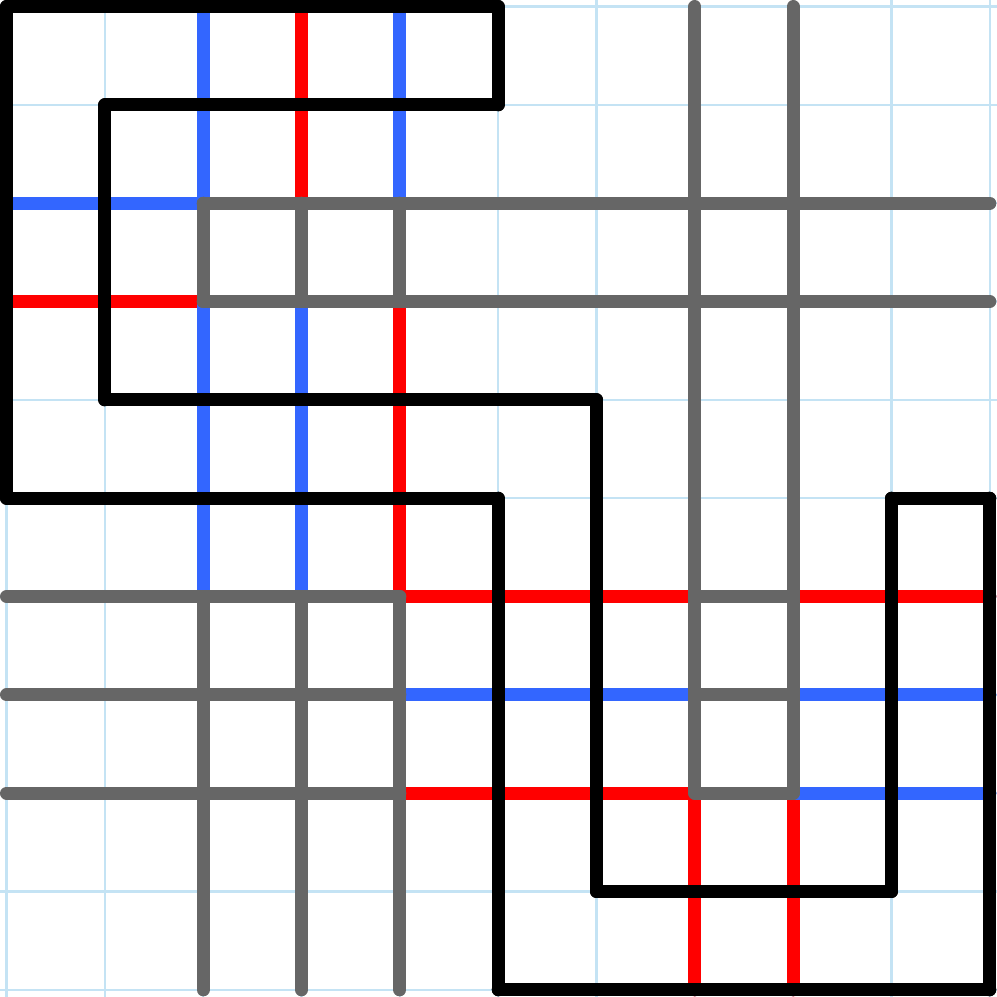}
  \caption{Extending creases into the rectangular bounding box. Each segment of
    a crease must have a uniform color until it meets a perpendicular crease.}
  \label{crease extension}
\end{figure}



\begin{figure}
  \centering
  \includegraphics[scale=0.5]{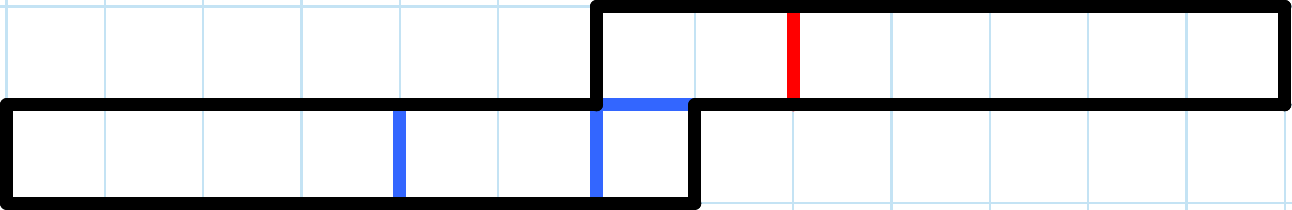}\hfil
  \includegraphics[scale=0.5]{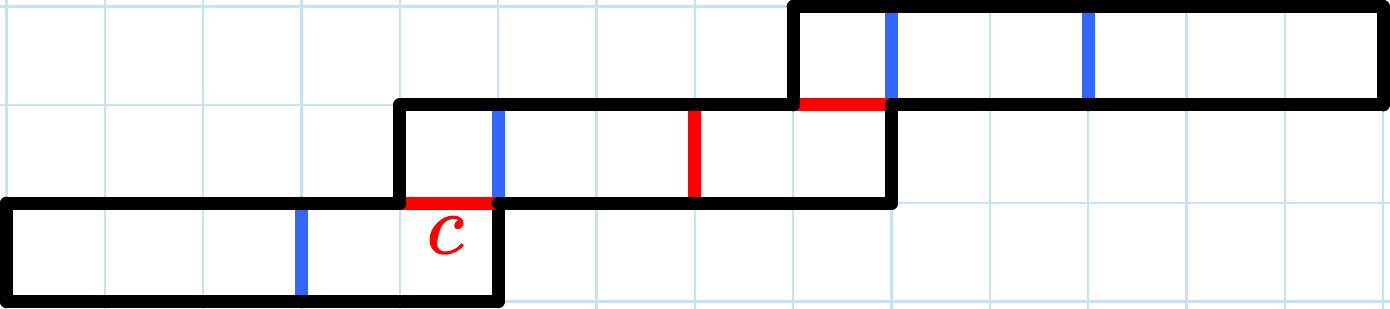}
  \caption{Left: A foldable pattern where folding the horizontal crease seems
    safe (does not cause immediate conflicts), but makes the pattern unfoldable.
    Right: A foldable pattern where crease $c$ appears safe even when
    considering 1D mixed foldability of every row and column,
    yet folding it breaks foldability.}
  \label{orth hard examples}
\end{figure}

\section*{Acknowledgments}

This work was initiated during an MIT class on Geometric Folding Algorithms
(6.849, Fall 2020).  We thank the other participants of that class ---
in particular, Walker Anderson, Joshua Ani, Lily Chung, Vincent Huang,
Jeffery Li, and Jamie Tucker-Foltz
--- for helpful discussions and providing a productive research environment.

\bibliographystyle{alpha}
\bibliography{citations.bib}

\newcommand{\etalchar}[1]{$^{#1}$}
\begin{thebibliography}{ABD{\etalchar{+}}04}

\bibitem[AAB{\etalchar{+}}20]{infinitealllayers}
Hugo~A. Akitaya, Cordelia Avery, Joseph Bergeron, Erik~D. Demaine, Justin Kopinsky, and Jason~S. Ku.
\newblock Infinite all-layers simple foldability.
\newblock {\em Graphs and Combinatorics}, 36:231--244, 2020.
\newblock Special issue from JCDCGGG 2017.

\bibitem[ABD{\etalchar{+}}04]{map}
Esther~M. Arkin, Michael~A. Bender, Erik~D. Demaine, Martin~L. Demaine, Joseph S.~B. Mitchell, Saurabh Sethia, and Steven~S. Skiena.
\newblock When can you fold a map?
\newblock {\em Computational Geometry}, 29(1):23--46, 2004.

\bibitem[ADK17]{simple}
Hugo~A. Akitaya, Erik~D. Demaine, and Jason~S. Ku.
\newblock Simple folding is really hard.
\newblock {\em Journal of Information Processing}, 25:580--589, 2017.
\newblock Special issue from JCDCGGG 2016.

\bibitem[DO07]{Demaine-O'Rourke-2007}
Erik~D. Demaine and Joseph O'Rourke.
\newblock {\em Geometric Folding Algorithms: Linkages, Origami, Polyhedra}.
\newblock Cambridge University Press, July 2007.

\bibitem[Hul03]{Hull-2003-counting}
Thomas Hull.
\newblock Counting mountain-valley assignments for flat folds.
\newblock {\em Ars Combinatoria}, 67:175--187, April 2003.

\end{thebibliography}

\end{document}